\numberwithin{equation}{section}
\theoremstyle{definition}
\newtheorem{theorem}{Theorem}[section]
\newtheorem{lemma}{Lemma}[section]
\newtheorem{definition}{Definition}[section]
\newtheorem{remark}{Remark}[section]
\newtheorem{consequence}{Consequence}[section]
\newcommand{\beq}{\begin{equation}}
\newcommand{\ee}{\end{equation}}
\newcommand{\ben}{\begin{equation*}}
\newcommand{\een}{\end{equation*}}
\newcommand{\normord}[1]{:\mathrel{\mkern2mu #1 \mkern2mu}:}
\DeclareMathOperator*{\res}{Res}
\DeclareMathOperator{\Fib}{Fib}
\DeclareMathOperator*{\Tr}{Tr}
\newcommand\myeq{\mathrel{\stackrel{\makebox[0pt]{\mbox{\normalfont\tiny def}}}{=}}}
\providecommand{\keywords}[1]
{
  \small	
  \textbf{\textit{Keywords---}} #1
}
\newcolumntype{M}[1]{>{\centering\arraybackslash}m{#1}}
\title{Semi-infinite construction of
one-dimensional lattice vertex superalgebras}
\author[]{Timur Kenzhaev \thanks{kenzhaev\_t\_d@mail.ru}}
\affil[]{Skolkovo Institute of Science and Technology, Moscow, Russia}
\date{}
\begin{document}

\pagenumbering{arabic}

\maketitle

\begin{abstract}
We construct the Feigin-Stoyanovsky (combinatorial) basis in case of one-dimensional lattice vertex superalgebras $V_{\sqrt{N}\,\mathbb{Z}}$. Our proof is based on invariance of semi-infinite monomials linear span under action of corresponding Heisenberg algebra. Semi-infinite monomials are parametrized by natural generalization of Maya diagrams~---~Fibonacci configurations on $\mathbb{Z}$, which allows us to construct a desired basis with character considerations. We also discuss some related questions such as functional realization of basic subspace's dual and representational proof of Feigin-Stoyanovsky construction in case of $V_{\sqrt{2}\,\mathbb{Z}}$.  
\\
\end{abstract}

\keywords{vertex operators, lattice vertex algebras, combinatorial bases, Feigin-Stoyanovsky bases, basic subspaces, semi-infinite monomials.}

\section{Introduction}
Feigin and Stoyanovsky showed in \cite{FS} that standard $\widehat{\mathfrak{sl}_2}'$-module  $L_{(0, 1)}$ has the basis of semi-infinite monomials
\ben
e_{i_1}\,e_{i_2}\,e_{i_3}\,\ldots,
\een
such that
\begin{enumerate}
    \item $i_1 < i_2 < i_3 < \cdots$,
    \item $i_{k + 1} - i_k \geq 2$,
    \item $i_{2k} = 0 \quad \text{for } k \gg 1$,
    \item $i_{2k + 1} = 1 \quad \text{for } k \gg 1$.
\end{enumerate}
There are several ways to prove this fact described in \cite{FS, FF_principle, Penn_Lattice, Kenzhaev_Alternative_2023}. All of them are based on considering \emph{basic subspace} (or equivalently \emph{principal subalgebra}) of $L_{(0, 1)}$. This module can be viewed as particular case of 1d lattice VOA $V_{\sqrt{2}\,\mathbb{Z}}$, which led to a natural generalization of Feigin-Stoyanovsky basis on one-dimensional even lattices in \cite{FF_principle} and structure of principal subalgebra in case of arbitrary integral lattice in \cite{Penn_Lattice}. In this article we construct Feigin-Stoyanovsky basis of arbitrary one-dimensional lattice vertex superalgebras using character considerations and invariance of semi-infinite monomials linear span under action of Heisenberg algebra. It turns out, that semi-infinite basis in $V_{\sqrt{N}\,\mathbb{Z}}$ can be conveniently parametrized by infinite Fibonacci configurations, described in \cite{Kenzhaev_Durfee_2023}:
\begin{definition}
\emph{Infinite Fibonacci configuration} of type $(\theta, l), \text{where } \:\theta, l\in\mathbb{Z} $, $\theta \geq 0,\: l > 0,\:\theta \leq l$ is a function $a\colon \mathbb{Z}\longrightarrow \{0, 1\}$, such that:
\begin{enumerate}
\item $a_i + a_{i + 1} + a_{i + 2} + \ldots + a_{i + l} \leq 1$,
\item $a_i = 0\quad \text{for } i \gg 1$,
\item $a_{-n} = 1$, if $-n \equiv \theta\; \mod l + 1,\: \text{otherwise } a_{-n} = 0\quad \text{for } n \gg 1$.
\end{enumerate}
Set of $(\theta, l)$ configurations is denoted by $\Fib^{(\theta,\, l)}_{\infty}$. Set of all infinite Fibonacci configurations of all types is denoted by $\Fib_{\infty}$.
\end{definition}
Construct by induction map $\tau\colon\Fib_{\infty} \longrightarrow \mathbb{Z}^{\mathbb{N}}$ as
\ben
\tau(a)_i = -\max \left(\left\{j\in\mathbb{Z}\hspace{2mm} |\hspace{2mm} a_j\neq 0\}\setminus\{\tau(a)_{i - 1}, \ldots, \tau(a)_{1}\right\}\right).
\een
This map assigns to any Fibonacci configuration $a$ increasing sequence of positions with ``1" of configuration $-a$.
\\\\
The goal of this article is to prove \textbf{Main Theorem}:

\begin{theorem}
\label{maintheorem}
~\
	\begin{enumerate}
\item Vertex algebra $V_{\sqrt{2N}\,\mathbb{Z}}$ has basis 
\ben
\left\{\mathbf{e}_a \hspace{2mm}|\hspace{2mm} a\in\Fib^{(N,\, 2N - 1)}_{\infty}\right\},
\een
where $\mathbf{e}_a = e_{\tau(a)_1}\,e_{\tau(a)_2}\,e_{\tau(a)_3}\ldots$ is semi-infinite monomial.
\item Vertex algebra $V_{\sqrt{2N + 1}\,\mathbb{Z}}$ has basis 
\ben
\left\{\mathbf{\Theta}_a \hspace{2mm}|\hspace{2mm} a\in\Fib^{(0,\, 2N)}_{\infty}\right\},
\een
where $\mathbf{\Theta}_a$ = $\theta_{\tau(a)_1}\,\theta_{\tau(a)_2}\,\theta_{\tau(a)_3}\ldots$ is semi-infinite monomial.
\end{enumerate}
\end{theorem}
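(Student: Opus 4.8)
The plan is to prove both statements by the two-step scheme standard for monomial bases of lattice vertex algebras: first show that the Fibonacci-admissible semi-infinite monomials span, using quadratic relations among the Fourier modes extracted from the operator product expansion of the lattice vertex operator with itself; then establish linear independence by matching graded characters. I describe the even case $V_{\sqrt{2N}\,\mathbb{Z}}$ in detail. The odd case $V_{\sqrt{2N+1}\,\mathbb{Z}}$ is entirely parallel: the vertex operator of the lattice generator becomes an odd element $\theta^\alpha$, and its skew-symmetry carries the sign $(-1)^{\langle\alpha,\alpha\rangle}=-1$, which is precisely what shifts the admissibility window from $l+1=2N$ to $l+1=2N+1$.

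\emph{Straightening relations.} Realize $V_{\sqrt{2N}\,\mathbb{Z}}=M(1)\otimes\mathbb{C}[\sqrt{2N}\,\mathbb{Z}]$ in Frenkel--Kac form, set $\alpha=\sqrt{2N}$ and $Y(e^\alpha,z)=\sum_{n}e_n z^{-n-N}$. Since $\langle\alpha,\alpha\rangle=2N$, the product $Y(e^\alpha,z)Y(e^\alpha,w)$ vanishes to order $2N$ at $z=w$: the coefficients of $(z-w)^{0},\dots,(z-w)^{2N-1}$ all vanish, while the coefficient of $(z-w)^{2N}$ is, up to the two-cocycle sign, $Y(e^{2\alpha},w)$. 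Extracting Fourier modes from the first $2N$ vanishing coefficients gives, for each total index $s$, a linear system among the products $\{e_ie_j:i+j=s\}$; a short computation shows these relations suffice to rewrite every product whose gap $|i-j|$ is below the window $l+1=2N$ as a combination of products $e_{i'}e_{j'}$ with $i'+j'=s$ and $|i'-j'|\ge 2N$ (using, where needed, also the relation iterated from the $(z-w)^{2N}$-coefficient, whose $e^{2\alpha}$-modes are again quadratic in the $e_n$). This is the algebraic content of the constraint $a_i+a_{i+1}+\dots+a_{i+l}\le 1$ with $l=2N-1$; the same extraction in the odd lattice, where $\langle\alpha,\alpha\rangle=2N+1$, gives $l=2N$.

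\emph{Spanning.} First, $\mathbf{e}_a$ is given a meaning as an element of $V_{\sqrt{2N}\,\mathbb{Z}}$: applying $e_{\tau(a)_1}\cdots e_{\tau(a)_k}$ to the Fock vacuum of a lattice sector chosen sufficiently negative (depending on $k$), the finiteness of the homogeneous components and the boundedness below of the conformal weight in each sector force this truncation to stabilize as $k\to\infty$; its stable value is $\mathbf{e}_a$, and one checks that the minimal-energy configuration of each charge sector is sent to a nonzero multiple of the corresponding Fock vacuum $e^{m\alpha}$. Consequently every vector of $V_{\sqrt{2N}\,\mathbb{Z}}$ is obtained from some minimal configuration by finitely many $e_n$'s together with lowering modes $e^{-\alpha}_n$ and Heisenberg modes; the latter act on semi-infinite $e^\alpha$-monomials again as combinations of such monomials (by the OPE and the ``Dirac sea'' modification of the tail), so every vector equals a combination of expressions $e_{i_1}\cdots e_{i_r}\cdot(\text{admissible tail})$. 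Now apply the straightening relations to the finite head: order such expressions by conformal weight, then by the multiset of head indices in a well-founded order, then by head length; each straightening step strictly decreases this order while preserving the bidegree and the sector, so, working inside a fixed finite-dimensional bidegree component --- where any semi-infinite monomial has only finitely many non-tail modes --- a Noetherian induction shows that the admissible $\mathbf{e}_a$ span. The super case is identical.

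\emph{Linear independence and the main obstacle.} To conclude, compare dimensions. The character of $V_{\sqrt{2N}\,\mathbb{Z}}$, graded by $q^{L_0}$ and by lattice charge $z$, is the classical
\beq
\chi_{V_{\sqrt{2N}\,\mathbb{Z}}}(q,z)=\frac{1}{\varphi(q)}\sum_{m\in\mathbb{Z}}z^{m}q^{Nm^{2}},\qquad \varphi(q)=\prod_{k\ge 1}(1-q^{k}),
\ee
with $Nm^{2}$ replaced by $\tfrac{2N+1}{2}m^{2}$ in the odd case. Assigning to $a\in\Fib^{(N,2N-1)}_{\infty}$ the weight $q^{E(a)}z^{C(a)}$, where $E(a)$ and $C(a)$ are the conformal weight and charge of $\mathbf{e}_a$ --- finite, since $a$ is eventually periodic, and computed from $\tau(a)$ --- one evaluates $\sum_a q^{E(a)}z^{C(a)}$ by the transfer-matrix recursion for window-$(l+1)$ Fibonacci configurations of \cite{kenzhaev2023durfee} and finds it equal to the series above, an Andrews--Gordon-type identity at modulus $l+1$. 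Since the spanning family then has, in each bidegree, exactly the dimension of the corresponding component, it is a basis; the same count with the super-character settles $V_{\sqrt{2N+1}\,\mathbb{Z}}$, while the functional realization of the dual announced in the abstract gives an alternative, more explicit separation of the $\mathbf{e}_a$ by evaluation functionals. The genuinely delicate step is the spanning argument --- making the semi-infinite monomials honest vectors through stabilizing truncations, showing the minimal configurations recover all sector vacua, and guaranteeing termination of the straightening for infinite monomials --- which is handled, as above, by confining everything to one finite-dimensional bidegree component; the OPE extractions and the character identity are, by comparison, routine.
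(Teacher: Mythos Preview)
Your overall scheme---extract quadratic straightening relations from the OPE, prove spanning, then match characters---is exactly the paper's. The straightening and the character comparison are handled essentially as in the paper, though you gloss over the one nontrivial point in the straightening: that the $N$ independent relations coming from $e\,e^{(2k)}(z)=0$, $0\le k\le N-1$, actually suffice to eliminate all $e_ne_{n+k}$ with $|k|<2N$. The paper does this by writing the relations in matrix form and showing the leading $N\times N$ minor reduces, after elementary operations, to an even (resp.\ odd) Vandermonde determinant at integer (resp.\ half-integer) nodes; your ``short computation'' and the aside about iterating the $(z-w)^{2N}$-coefficient are not a substitute for this.

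The more serious divergence is in the spanning argument. The paper does \emph{not} work directly with semi-infinite monomials. It introduces basic subspaces $W_j=\mathbb{C}[e_i\,|\,i\in\mathbb{Z}]\,|j\sqrt{2N}\rangle$ generated by \emph{finite} monomials on the shifted vacua, observes $W_j\subset W_{j-1}$, and proves that the union $W=\bigcup_j W_j$ is invariant under all Heisenberg modes $a_n$. The key step is concrete: for each $n\ge 1$ and each $j$, the vector $e_{-2Nj+N-n}\,|(j-1)\sqrt{2N}\rangle\in W_{j-1}$ is shown, via $[a_n,e_m]=\sqrt{2N}\,e_{n+m}$, to have nonzero component along $a_{-n}\,|j\sqrt{2N}\rangle$, and an induction on $n$ then gives $a_{-n}\,|j\sqrt{2N}\rangle\in W$. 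Since $W$ contains every sector vacuum and is $a$-stable, $W=V_{\sqrt{2N}\,\mathbb{Z}}$. Only \emph{after} this does the paper identify $|j\sqrt{2N}\rangle$ with the periodic tail $e_{(-2j+1)N}e_{(-2j+3)N}\cdots$ and read off the semi-infinite description.

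Your route instead asserts that Heisenberg and $e^{-\alpha}$ modes send semi-infinite $e$-monomials to combinations of such ``by the OPE and the Dirac sea modification of the tail''. This is the step that is not justified: commuting $a_n$ through an honest semi-infinite product produces an infinite sum of shifted monomials of the \emph{same} bidegree, so finite-dimensionality of a weight space does not by itself truncate it, and you give no argument that the result lies in the span. The paper's detour through finite monomials on moving vacua is precisely what circumvents this difficulty; if you want to keep your organization, you need either to prove that claim directly or to replace it by the paper's $a$-invariance argument for $\bigcup_j W_j$.
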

The main idea of the proof is to show that union of the basic subspaces $W_j^{\sqrt{N}}$ (subspace which is obtained by applying modes of $V_{\sqrt{N}}\,(z)$ to vector $|j\sqrt{N}\rangle$) constitutes the whole vertex algebra:
\ben
V_{\sqrt{N}\mathbb{Z}} = \bigoplus_{j\in\mathbb{Z}}\,W^{\sqrt{N}}_j, \hspace{5mm} W_{j}^{\sqrt{N}} \subset W_{j - 1}^{\sqrt{N}}.
\een
In \cite{FS} the functional realization of restricted dual $\left(W_0^{\sqrt{2}}\right)^*$ was shown:
\ben
\left(W^{\sqrt{2}}_0\right)^* = \bigoplus_{m \geq 0} \, W_{m}^*, \hspace{5mm} W_{m}^* = \left\{g(z_1, z_2, \ldots, z_m)\left(\prod\limits_{i < j}\,(z_i - z_j)^2\right)\,dz_1\,dz_2\ldots dz_m\right\},
\een
where $g$ is symmetric polynomial. This realization may be used to identify $L_{(0, 1)}^*$ with space of “semi-infinite restricted symmetric power” of the space $\Omega^1(S^1)$ of 1-forms on circle. We obtain natural generalization of this result in \hyperref[Frealization]{Section 6}.
We also give a representational proof of Feigin-Stoyanovsky basis construction in case of $L_{(0, 1)}$ in \hyperref[Section_3]{Section 3} showing semi-infinite monomials linear span invariance under action of $\widehat{\mathfrak{sl}_2}'$, complementing the list of known proofs.  
\\\\
\textbf{Acknowledgements.} The author is grateful to Prof. Boris Feigin for the problem statement and valuable discussions.	
\section{OPE and relations}
Heisenberg Lie algebra $\mathfrak{H}$ is $\mathfrak{H} = \langle C, a_n \:|\: n\in\mathbb{Z}  \rangle_{\mathbb{C}}$ with bracket
\ben
[a_n, a_m] = \delta_{n + m, 0}\:C, \hspace{5mm} [C, \:\cdot\:] = 0. 
\een
Let $F_{\mu}$ be the Fock module over Heisenberg algebra  with highest vector $|\mu\rangle$:
\ben
c |\mu\rangle = |\mu\rangle,\hspace{5mm} h_0 |\mu\rangle = \mu|\mu\rangle.  
\een
Vertex operator superalgebra structure on $V_{\sqrt{N}\,\mathbb{Z}}\, = \bigoplus\limits_{\mu \in \sqrt{N}\,\mathbb{Z}}\, F_{\mu}$ is defined by follows:
\begin{enumerate}
    \item $(\mathbb{Z}/2\mathbb{Z})$\textbf{-gradation}:
    \ben
    V^{\Bar{0}} : = \bigoplus\limits_{\substack{\lambda\in\sqrt{N}\,\mathbb{Z} \\ \lambda^2 \equiv 0 \mod 2}}\,F_{\lambda},\hspace{1cm} V^{\Bar{1}} : = \bigoplus\limits_{\substack{\lambda\in\sqrt{N}\,\mathbb{Z} \\ \lambda^2 \equiv 1 \mod 2}}\,F_{\lambda}.
    \een
    \item (\textbf{Vacuum vector}) \, $|0\rangle$.
    \item (\textbf{Translation operator}) \, $T = \sum\limits_{n \geq 0}\, a_{- n - 1}\,a_n$. 
    \item (\textbf{Vertex operators}) Set
    \ben
\begin{aligned}
&Y(|0\rangle, z) = 0, \hspace{5mm} Y(a_{-1}\,|0\rangle) := a(z) = \sum\limits_{n\in\mathbb{Z}}\,a_n\,z^{- n - 1}.
\\
&Y(|\mu\rangle, z) = e^{\mu\, q}\,z^{\mu\,a_0}\,\exp\left( -\mu\sum\limits_{n < 0}\, \frac{a_{n}}{n}\,z^{-n}\right)\,\exp\left( -\mu\sum\limits_{n > 0}\, \frac{a_{n}}{n}\,z^{-n}\right),
\end{aligned}
    \een
\end{enumerate}
with $[a_0, q] = 1$. Due to the Strong Reconstruction Theorem \cite{Iohara} these data is enough to define a vertex superalgebra structure on $V_{\sqrt{N}\,\mathbb{Z}}$.
\\\\
OPE of two vertex operators is 
\ben
V_{\eta}(z)\,V_{\mu}(\omega) = (z - w)^{\eta\mu}\, \normord{V_{\eta}(z)\,V_{\mu}(\omega)}\,,
\een
where
\ben
\normord{V_{\eta}(z)\,V_{\mu}(\omega)} \, = e^{(\eta + \mu)\,q}\,z^{\eta\,a_0}\,w^{\mu\,a_0}\,\exp\left( -\sum\limits_{n < 0}\, \frac{a_{n}}{n}\,(\eta\,z^{-n} + \mu\,w^{-n})\right)\,\exp\left( -\sum\limits_{n > 0}\, \frac{a_{n}}{n}\,(\eta\,z^{-n} + \mu\,w^{-n})\right).
\een
There is a family of conformal vectors $\omega_{\lambda} = \frac{1}{2}\,a_{-1}^2 + \lambda\,a_{-2}$, $\lambda \in\mathbb{C}$ in $V_{\sqrt{N}\,\mathbb{Z}}$. In this work vertex algebra $V_{\sqrt{2N}\,\mathbb{Z}}$ is considered as conformal with conformal vector $\omega_0$, $V_{\sqrt{2N + 1}}\,(z)$ is considered as conformal with conformal vector $\omega_{\frac{\sqrt{2N + 1}}{2}}$. Corresponding characters (see Figures \ref{Even}, \ref{Odd}) by the definition are:
\beq
\label{Character2N}
\ch V_{\sqrt{2N}\,\mathbb{Z}} = \Tr\left(z^{a_0}\, q^{L_0^{0}}\right) = \sum\limits_{m\in\mathbb{Z}}\,\frac{z^m\,q^{Nm^2}}{(q)_{\infty}},
\ee
and
\beq
\label{Character2N+1}
\ch V_{\sqrt{2N + 1}\,\mathbb{Z}} = \Tr\left(z^{a_0}\, q^{L_0^{\frac{\sqrt{2N + 1}}{2}}}\right) = \sum\limits_{m\in\mathbb{Z}}\,\frac{z^m\,q^{(2N + 1)\frac{m(m - 1)}{2}}}{(q)_{\infty}}.
\ee

\begin{figure}[h!]
\begin{center}
\includegraphics[scale = 0.4]{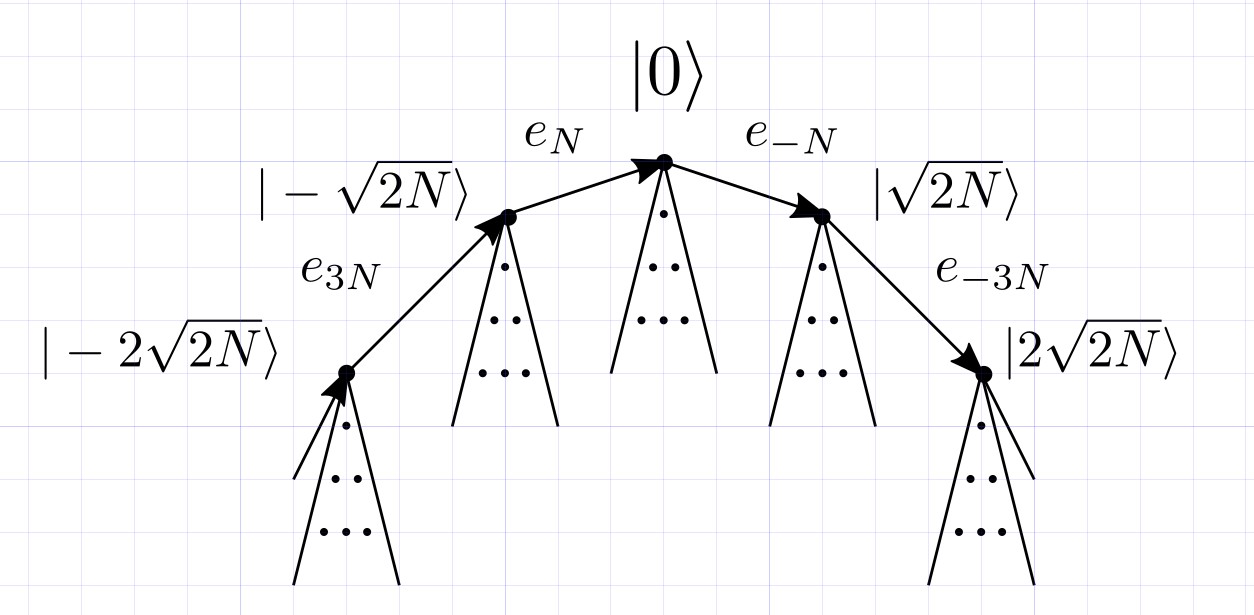}
\caption{Weight diagram for $V_{\sqrt{2N}\,\mathbb{Z}}$ \hspace{2cm} Notation from \eqref{DenoteModesEven}
\\
Conformal vector is $\omega_0 = \frac{1}{2}\,a_{-1}^2$
\\
$\ch V_{\sqrt{2N}\,\mathbb{Z}} = \sum\limits_{m\in\mathbb{Z}}\,\frac{z^m\,q^{Nm^2}}{(q)_{\infty}}$}
\label{Even}
\end{center}
\end{figure}

\begin{figure}[h!]
\begin{center}
\includegraphics[scale = 0.5]{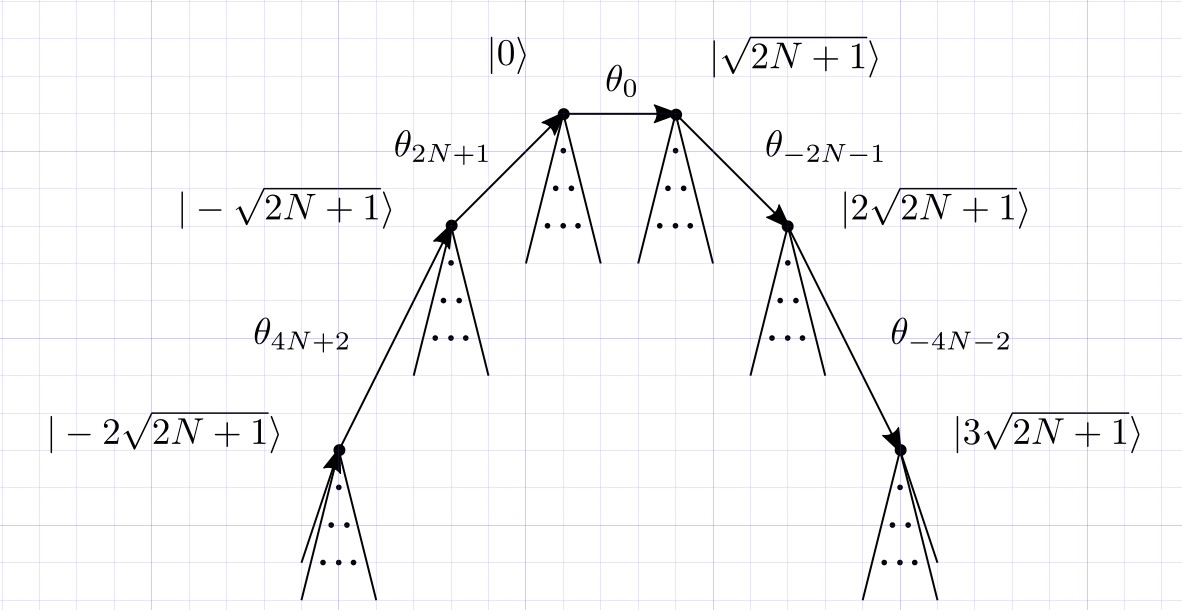}
\caption{Weight diagram for $V_{\sqrt{2N + 1}\,\mathbb{Z}}$ \hspace{2cm} Notation from \eqref{DenoteModesOdd}
\\
Conformal vector is $\omega_{\frac{\sqrt{2N + 1}}{2}} = \frac{1}{2}\,a_{-1}^2 + \frac{\sqrt{2N + 1}}{2}\,a_{-2}$
\\
$\ch V_{\sqrt{2N + 1}\,\mathbb{Z}} = \sum\limits_{m\in\mathbb{Z}}\,\frac{z^m\,q^{(2N + 1)\frac{m(m - 1)}{2}}}{(q)_{\infty}}$
}
\label{Odd}
\end{center}
\end{figure}
\noindent
As soon as $V_{\sqrt{N}}\, (z)\cdot V_{\sqrt{N}}\,(w) \simeq (z - w)^N$ there is a series of relations:
\beq
V_{\sqrt{N}}\,(z)^2 = 0, \quad V_{\sqrt{N}}\,(z)\cdot V_{\sqrt{N}}\,(z)' = 0, \quad \ldots\,, \quad V_{\sqrt{N}}\,(z)\cdot V_{\sqrt{N}}\,(z)^{(N - 1)} = 0.  
\ee
These relations are clearly not independent of each other knowing that modes of $V_{\sqrt{N}}\,(z)$ commute/anticommute. For example, in case of $N = 2$ modes of $V_{\sqrt{2}}\,(z)$ commute and
\beq
\left(V_{\sqrt{2}}\,(z)^2\right)' = 2\, V_{\sqrt{2}}\,(z)\cdot V_{\sqrt{2}}\,(z)' = 0.
\ee
Denote
\beq
\label{DenoteModesEven}
V_{\sqrt{2N}}\,(z) \equiv e(z)  = \sum\limits_{n\in\mathbb{Z}}\, e_n\,z^{ - n - N}, \hspace{5mm} V_{-\sqrt{2N}}\,(z) \equiv f(z) =  \sum\limits_{n\in\mathbb{Z}}\, f_n\,z^{- n - N}.
\ee
\beq
\label{DenoteModesOdd}
V_{\sqrt{2N + 1}}\,(z) \equiv \theta(z)  = \sum\limits_{n\in\mathbb{Z}}\, \theta_n\,z^{-n}, \hspace{5mm} V_{-\sqrt{2N + 1}}\,(z) \equiv \theta^*(z) =  \sum\limits_{n\in\mathbb{Z}}\, \theta^*_n\,z^{-n}.
\ee
In case of $V_{\sqrt{2N}\,\mathbb{Z}}$ ``defining" relations are
\beq
\label{RelationsEven}
[e_i, e_j] = 0, \quad e^2(z) = 0,\quad e\cdot e''(z) = 0, \quad \ldots\,, \quad e\cdot e^{(2N - 2)}(z) = 0. 
\ee
In case of $V_{\sqrt{2N + 1}\,\mathbb{Z}}$ ``defining" relations are
\beq
\label{RelationsOdd}
[\theta_i, \theta_j]_+ = 0, \quad \theta\cdot \theta'(z) = 0, \quad \theta\cdot \theta'''(z) = 0, \quad \ldots\,, \quad \theta\cdot \theta^{(2N - 1)}(z) = 0.
\ee
These relations are ``defining" in such sense that using them we obtain the structure of the basic subspace and consequently the semi-infinite construction of the entire vertex algebra.

\section{Semi-infinite construction of $V_{\sqrt{2}\,\mathbb{Z}}$}
\label{Section_3}
Vertex superalgebra $V_{\sqrt{2}\,\mathbb{Z}}$ is well-known to be a standard module $L_{(0, 1)}$ over $\widehat{\mathfrak{sl}}_2'$. Lie algebra $\widehat{\mathfrak{sl}_2}'$ is
\ben
\widehat{\mathfrak{sl}_2}' = \mathfrak{sl}_2\otimes\mathbb{C}[t, t^{-1}]\oplus\mathbb{C}\,K
\een
with bracket
\ben
\begin{aligned}
&[e_n, e_m] = [f_n, f_m] = 0, &[e_n, f_m] &= h_{n + m} + n\,\delta_{n, -m}\,K,
\\
&[h_n, e_m] = 2\,e_{n + m}, &[h_n, f_m] &= -2\,f_{n + m}, 
\\
&[h_n, h_m] = 2n\,\delta_{n, -m}\,K, &[K, \cdot\,\,] &= 0.
\end{aligned}
\een
Triangular decomposition is $\widehat{\mathfrak{sl}_2}' = \hat{\mathfrak{n}}_+\oplus \hat{\mathfrak{h}}\oplus \hat{\mathfrak{n}}_-$, where
\ben
\hat{\mathfrak{n}}_+ = \langle e_0\rangle + \sum\limits_{k > 0}\,t^k\,\mathfrak{sl}_2, \hspace{5mm}\hat{\mathfrak{h}} = \langle h_0, K \rangle, \hspace{5mm} \hat{\mathfrak{n}}_- = \langle f_0\rangle + \sum\limits_{k > 0}\,t^{-k}\,\mathfrak{sl}_2.
\een
\begin{definition}
$\widehat{\mathfrak{sl}_2}'$-module of the highest weight $(l, k)$ is the irreducible module $L_{(l, k)}$ with cyclic vector $v\in L_{(l, k)}$ s.t. 
\ben
\hat{\mathfrak{n}}_+\, v = 0, \quad h_0\,v = l v, \quad K\,v = k\,v.
\een
\end{definition}

\begin{definition}
\emph{Basic subspaces} of $V_{\sqrt{2}\,\mathbb{Z}}$ are 
\beq
W_j^{\sqrt{2}} \myeq \mathbb{C}[e_i \:|\: i\in\mathbb{Z}]\,|j\sqrt{2}\rangle.
\ee
\end{definition}

\begin{definition}
Consider polynomial ring $\mathbb{C}[x_i \:|\: i\in\mathbb{Z}]$. Monomial $x_{j_1}\,x_{j_2}\ldots x_{j_k} $ is called \emph{Fibonacci-1 monomial} if  $j_m - j_{m - 1} > 1$ for any $m\in\{2, 3 \ldots, k\}$. Polynomial is called Fibonacci-1 if it is a linear combination of Fibonacci monomials. Linear space of Fibonacci-1 polynomials is denoted as $\mathbb{C}^{F}_{1}[x_i]$.
There is natural bigradation on this space
\beq
\begin{aligned}
& \deg_z \left(x_{j_1}\,x_{j_2}\ldots x_{j_k}\right) = k,
\\
& \deg_q \left(x_{j_1}\,x_{j_2}\ldots x_{j_k}\right) = -\,j_1 - \,j_2 - \ldots - \,j_k.
\end{aligned}
\ee
\end{definition}
Structure of the basic subspaces is described by the following
\begin{lemma}
\label{lemmaBasicSubspaceSL}
\beq
W_{j}^{\sqrt{2}} \myeq \mathbb{C}[e_i \:|\: i\in\mathbb{Z}]\,|j\sqrt{2}\rangle = \mathbb{C}[e_i \:|\: i \leq -2j - 1]\,|j\sqrt{2}\rangle = \mathbb{C}^{F}_{1}[e_i \:|\: i \leq -2j - 1]\,|j\sqrt{2}\rangle.
\ee
\end{lemma}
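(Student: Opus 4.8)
The plan is to establish the two claimed equalities in turn, moving from the full polynomial algebra to the restricted set of modes and then to the Fibonacci-type polynomials.

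\textbf{Step 1: reduction of the range of modes.} First I would show $\mathbb{C}[e_i \mid i\in\mathbb{Z}]\,|j\sqrt2\rangle = \mathbb{C}[e_i \mid i \le -2j-1]\,|j\sqrt2\rangle$. The key observation is a weight/grading argument: with the conformal vector $\omega_0$, the vertex operator $e(z) = V_{\sqrt2}(z) = \sum_n e_n z^{-n-1}$ has $e_n$ of conformal weight $-n$, and $e_n$ raises the $a_0$-eigenvalue by $\sqrt2$. Acting on $|j\sqrt2\rangle$, which has conformal weight $j^2$ (since $\tfrac12(\sqrt2 j)^2 = j^2$) and $a_0$-eigenvalue $j\sqrt2$, any mode $e_n$ with $n$ too large lowers the conformal weight below the minimum possible in $F_{(j+1)\sqrt2}$, namely $(j+1)^2$; one checks $e_n|j\sqrt2\rangle = 0$ for $n \ge -2j$, equivalently only $e_i$ with $i \le -2j-1$ act nontrivially on the highest vector. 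The analogous vanishing for the ``boundary'' modes when applied to already-generated states is what propagates this bound; this is essentially the standard fact that the basic subspace is generated by the negative (with respect to the shifted grading) modes, and it is presumably recorded or immediate from the figures referenced in Section 2.

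\textbf{Step 2: from monomials to Fibonacci-1 polynomials.} Next I would prove $\mathbb{C}[e_i \mid i \le -2j-1]\,|j\sqrt2\rangle = \mathbb{C}^F_1[e_i \mid i \le -2j-1]\,|j\sqrt2\rangle$. The inclusion $\supseteq$ is trivial. For $\subseteq$, the exponents $i_l$ in a Fibonacci-1 monomial are automatically $1$ here (the variables $x_i$ correspond to $e_i$ and each appears to first power because $e^2(z)=0$ forces $e_i e_i = 0$, indeed $e_i e_{i'} = 0$ whenever $i + i'$ is even by the $N=2$ specialization of the relations $e e^{(2k)}(z)=0$ — more precisely the relation $e(z)e(z)=0$ together with $e(z)e''(z)=0$ and commutativity yields $e_i e_j = 0$ for $i+j \equiv 0$). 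So a general monomial $e_{i_1}\cdots e_{i_k}|j\sqrt2\rangle$ with $i_1 \le \dots \le i_k$ is either zero or has $i_{m} < i_{m-1}$ impossible after reordering; the substantive point is the gap condition $j_m - j_{m-1} > 1$, i.e. $e_i e_{i+1}$ is the only ``tight'' pair allowed and $e_i e_i = 0$, $e_i e_{i+1}$ survives but $e_i e_j$ with consecutive indices of the wrong parity must be rewritten. I would extract from \eqref{RelationsEven} with $N=1$ the straightening relations expressing $e_i e_{i+1}$ (or the forbidden configurations) as linear combinations of monomials with strictly larger total index-sum or with a wider gap, and run an induction on the $\deg_q$-grading (the index sum) to show every monomial reduces to a combination of Fibonacci-1 ones.

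\textbf{Step 3: the hard direction — linear independence / dimension count.} The genuine obstacle is showing that the Fibonacci-1 monomials applied to $|j\sqrt2\rangle$ are \emph{linearly independent}, so that the spanning statement of Step 2 is actually an equality of the right size. The clean way is a character comparison: compute $\sum z^{\deg_z} q^{\deg_q}$ over Fibonacci-1 monomials supported on $\{i \le -2j-1\}$ (a Rogers–Ramanujan-type generating function for partitions into distinct parts with difference $\ge 2$, suitably shifted), and match it against the known graded character of $W_j^{\sqrt2}$ read off from \eqref{Character2N} / Figure~\ref{Even} restricted to the $F_{j\sqrt2}$-graded piece. Equality of characters plus the spanning property from Step 2 forces linear independence. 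Alternatively, one can invoke the Feigin–Stoyanovsky result \cite{FS} directly, since $W_0^{\sqrt2}$ is the basic subspace of $L_{(0,1)}$ and the finite Fibonacci basis there is exactly the $n\to\infty$ truncation of $\mathbf{e}_a$; but I would prefer the self-contained character argument, as it also sets up the induction scheme used later for general $N$. The main effort is therefore bookkeeping the shift by $|j\sqrt2\rangle$ correctly in both the combinatorial and the Fock-space characters.
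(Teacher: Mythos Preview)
Your Step~1 and the final sentence of Step~2 (extract straightening relations from $e^2(z)=0$ and run an induction) are on the right track and match the paper's proof in outline. But there is a genuine error in the middle of Step~2, and Step~3 is not part of this lemma.

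\textbf{The error in Step~2.} You assert that $e^2(z)=0$ forces $e_i e_i = 0$, and that together with $e(z)e''(z)=0$ one gets $e_i e_j = 0$ whenever $i+j$ is even. Both claims are false. For $V_{\sqrt{2}\,\mathbb{Z}}$ (i.e.\ $N=1$ in the $\sqrt{2N}$ notation) the only relation in \eqref{RelationsEven} is $e^2(z)=0$; there is no $e\,e''(z)=0$. And $e^2(z)=0$ is a relation among \emph{sums} of products of modes, not a vanishing of individual products: it says $\sum_{k} e_{n-k}e_{n+k}=0$ and $\sum_{k\ge 0} e_{n-k}e_{n+1+k}=0$, which is precisely the paper's \eqref{RelationsE}. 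Concretely $e_{-2}^2|0\rangle$ is a nonzero multiple of $|2\sqrt{2}\rangle$. So ``each $e_i$ appears to first power'' is not automatic; the Fibonacci-1 condition (all exponents $\le 1$ \emph{and} gaps $\ge 2$) must be achieved by repeatedly applying \eqref{RelationsE} to rewrite both $e_n e_n$ and $e_n e_{n+1}$ in terms of wider-gap products. The paper does this via a double induction: outer induction on $\deg_q(g)$, and for fixed $\deg_q$ an inner (descending) induction on $m(g)=\max|i_l|$. The point is that applying \eqref{RelationsE} preserves $\deg_q$ but strictly increases $m$, and for $m$ large enough the extra factor annihilates $|j\sqrt{2}\rangle$, terminating the process. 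Your phrase ``strictly larger total index-sum'' suggests you expect $\deg_q$ to move, which it does not; the secondary invariant $m(g)$ is what makes the induction terminate.

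\textbf{Step~3 is not needed here.} The lemma is an equality of three subspaces of $V_{\sqrt{2}\,\mathbb{Z}}$, namely the images of $\mathbb{C}[e_i]$, $\mathbb{C}[e_i:i\le -2j-1]$, and $\mathbb{C}^F_1[e_i:i\le -2j-1]$ under evaluation at $|j\sqrt{2}\rangle$. This is purely a spanning statement: one inclusion is trivial and the other is the straightening argument. Linear independence of the Fibonacci monomials on $|j\sqrt{2}\rangle$ --- equivalently, the isomorphism $\mathbb{C}^F_1[e_i:i\le -2j-1]\simeq W_j^{\sqrt{2}}$ --- is \emph{not} asserted in this lemma; in the paper it is deduced only after Theorem~3.1 establishes $W^{\sqrt{2}}=V_{\sqrt{2}\,\mathbb{Z}}$, via the character comparison you describe. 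So your ``hard direction'' is real, but it belongs to a later statement, not to this one.
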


\begin{proof}
As soon as $e_n |j\sqrt{2}\rangle = 0$ for $n > -2j - 1$, then
\beq
W_{j}^{\sqrt{2}} \myeq \mathbb{C}[e_i \:|\: i\in\mathbb{Z}]\,|j\sqrt{2}\rangle = \mathbb{C}[e_i \:|\: i \leq -2j - 1]\,|j\sqrt{2}\rangle.
\ee
From relations \eqref{RelationsEven} we know that $e^2(z) = 0$ which can be rewritten (up to a constant) as
\beq
\begin{aligned}
\label{RelationsE}
&e_{N}\,e_{N} \sim e_{N - 1}\,e_{N + 1} + e_{N - 2}\,e_{N + 2}
 + \ldots \hspace{1cm} N\in\mathbb{Z}, 
\\
&e_{N}\,e_{N + 1} \sim e_{N - 1}\,e_{N + 2} + e_{N - 2}\,e_{N + 3} + \ldots \hspace{1cm} N\in\mathbb{Z}. 
\end{aligned}
\ee
\begin{remark}
Right parts of relations \eqref{RelationsE} are infinite sums, nevertheless their action is correctly defined as soon as for any fixed vector in $L_{(0, 1)}$ only finite number of terms act nontrivially.
\end{remark}
Afterwards we need to prove that (considering we ignore $|j\sqrt{2}\rangle$ on the right) any monomial from ${\mathbb{C}[e_i \:|\: i \leq -2j - 1]}$ may be reduced to Fibonacci polynomial. 
For simplicity we prove this fact for $W_{0}^{\sqrt{2}}$, other cases are proved in the same way (up to a shift of indices). For any monomial $g\in\mathbb{C}[e_i \:|\: i \leq - 1]$, $g = e_{-i_1}^{j_1}\,e_{-i_2}^{j_2}\ldots\,e_{-i_n}^{j_n}$ with $j_k > 0$ and $i_1 > i_2 > \cdots > i_n$ denote
\beq
m(g) = i_1,
\ee
and suppose $m(1) = 0$.
\\
We prove by induction on $\deg_{q}(g)$ and $m(g)$ that every monomial from $\mathbb{C}[e_i \:|\: i \leq - 1]$ may be reduced to the element of $\mathbb{C}^{F}_{1}[e_i \:|\: i \leq - 1]$ with relations \eqref{RelationsE}. Monomial with $\deg_q(g) = 0$ is constant and subsequently lies in $\mathbb{C}^{F}_{1}[e_i \:|\: i \leq - 1]$. Suppose that all monomials with $\deg_q < n\in\mathbb{N}$ are reducible, then we need to prove that all monomials with $\deg_q = n$ are reducible. Monomials $g$ with $m(g) \geq n$ are all reducible as soon as they are proportional to $e_{-n}$. Suppose that all monomials $g$ with $\deg g = n$ and $m(g) > i_1$ are reducible.   
\\
Consider monomial $g = e_{-i_1}^{j_1}\,e_{-i_2}^{j_2}\ldots\,e_{-i_n}^{j_n}$ with $j_k > 0$ and $i_1 > i_2 > \cdots > i_n$, $\deg_q(g) = n$ and $m(g) = i_1$.
\beq
e_{-i_1}^{j_1}\,e_{-i_2}^{j_2}\ldots\,e_{-i_n}^{j_n} = e_{-i_1}\,e_{-i_1}^{j_1 - 1}\,e_{-i_2}^{j_2}\ldots\,e_{-i_n}^{j_n},
\ee
monomial $\tilde{g} = e_{-i_1}^{j_1 - 1}\,e_{-i_2}^{j_2}\ldots\,e_{-i_n}^{j_n}$ may be reduced to Fibonacci polynomial (as soon as $\deg_q(\tilde{g})~<~\deg_q(g)$). Then
\beq
\begin{aligned}
&e_{-i_1}\,\tilde{g} = e_{-i_1}\,e_{-i_1}\,P_{1}(e_{-i_1 + 2}, \ldots, e_{-1}) + e_{-i_1}\,e_{-i_1 + 1}\,P_{2}(e_{-i_1 + 3}, \ldots, e_{-1})
\\
&+  e_{-i_1}\,P_3(e_{-i_1 + 2}, \ldots, e_{-1}) + \sum\limits_{k}\, h_k,
\end{aligned}
\ee
where $P_1, P_2, P_3\in \mathbb{C}_{1}^{F}[e_i \:|\: i \leq - 1]$, $\sum\limits_{k}\, h_k$ is the finite sum of monomials $h_k$ with $\deg_q(h_l) = n$ and $m(g) > i_1$, i.e. they are all reducible.
\\
Summand
\beq
e_{-i_1}\,P_3(e_{-i_1 + 2}, \ldots, e_{-1}) \in \mathbb{C}_1^{F}[e_i \:|\: i \leq - 1]. 
\ee
With relations \eqref{RelationsE}
\beq
\begin{aligned}
&e_{-i_1}\,e_{-i_1}\,P_{1}(e_{-i_1 + 2}, \ldots, e_{-1}) \sim e_{-i_1 + 1}\,e_{-i_1 - 1}\,P_{1}(e_{-i_1 + 2}, \ldots, e_{-1})  
\\
& + e_{-i_1 + 2}\,e_{-i_1 - 2}\,P_{1}(e_{-i_1 + 2}, \ldots, e_{-1}) + \, \ldots \,,
\end{aligned}
\ee
\beq
\begin{aligned}
&e_{-i_1}\,e_{-i_1 + 1}\,P_{2}(e_{-i_1 + 3}, \ldots, e_{-1}) \sim e_{-i_1 - 1}\,e_{-i_1 + 2}\,P_{2}(e_{-i_1 + 3}, \ldots, e_{-1})
\\
&+ e_{-i_1 - 2}\,e_{-i_1 + 2}\,P_{2}(e_{-i_1 + 3}, \ldots, e_{-1}) + \, \ldots \,,
\end{aligned}
\ee
where all summands either vanish (being applied to $|0\rangle$) or reducible by assumption of induction (all summands have $m > i_1$ and $\deg_q = n$). 
\end{proof}

\begin{theorem}
    $W^{\sqrt{2}} = \bigcup\limits_{j\in\mathbb{Z}}\, W^{\sqrt{2}}_j$, $W^{\sqrt{2}}\subset V_{\sqrt{2}\,\mathbb{Z}} \simeq L_{(0, 1)}$ is invariant under action of $\widehat{\mathfrak{sl}_2}'$. In particular, $W^{\sqrt{2}} = V_{\sqrt{2}\,\mathbb{Z}} \simeq L_{(0, 1)}$.
\end{theorem}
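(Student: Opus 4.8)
The plan is to prove the stronger statement $W^{\sqrt 2}=\bigcup_{j}W_j^{\sqrt 2}=V_{\sqrt 2\mathbb Z}$ directly; $\widehat{\mathfrak{sl}_2}'$-invariance is then automatic. (One can equivalently first check invariance under all $e_n,h_n,f_n$ and quote irreducibility of $L_{(0,1)}$, as in the statement; $e_n$ and $h_{\geq 0}$ are trivial, and $h_{<0}$ and $f_n$ reduce, via the nesting below and the bracket relations $[e_i,f_j]=h_{i+j}+i\delta_{i,-j}K$, to the same key claim.) First I would record the nesting and its consequence: since $e_n|(j-1)\sqrt{2}\rangle=0$ for $n>-2j+1$, reading off the lowest power of $z$ in the identity $e(z)|(j-1)\sqrt{2}\rangle=z^{2(j-1)}\exp\!\bigl(\sqrt 2\sum_{k>0}\tfrac{a_{-k}}{k}z^{k}\bigr)|j\sqrt{2}\rangle$ shows that $e_{-2j+1}|(j-1)\sqrt{2}\rangle$ is a nonzero multiple of $|j\sqrt{2}\rangle$. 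Hence $|j\sqrt{2}\rangle\in W_{j-1}^{\sqrt 2}$ and $W_j^{\sqrt 2}\subseteq W_{j-1}^{\sqrt 2}$, so $W^{\sqrt 2}=\bigcup_{N\geq 0}W_{j_0-N}^{\sqrt 2}$ for any $j_0$ is an increasing union; in particular any finite sum of elements of $W^{\sqrt 2}$ already lies in a single $W_{j_0-N}^{\sqrt 2}$.

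The key step is the claim that $F_{j\sqrt 2}\subseteq W^{\sqrt 2}$ for every $j$, equivalently that every Heisenberg monomial $a_{-q_1}\cdots a_{-q_r}|j\sqrt{2}\rangle$ lies in $W^{\sqrt 2}$; I would prove this by induction on the weight $p=q_1+\cdots+q_r$. Two exact identities drive it. From $[a_n,e_m]=\sqrt 2\,e_{n+m}$, for any $e$-monomial $g$ and any $q\geq 1$,
\beq
a_{-q}\,\bigl(g\,|\ell\sqrt{2}\rangle\bigr)=\bigl(\text{finite sum of $e$-monomials}\bigr)|\ell\sqrt{2}\rangle+g\,\bigl(a_{-q}|\ell\sqrt{2}\rangle\bigr),
\ee
so $a_{-q}$ maps $W_\ell^{\sqrt 2}$ into $W_\ell^{\sqrt 2}+\mathbb C[e_i]\bigl(a_{-q}|\ell\sqrt{2}\rangle\bigr)$. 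And extracting the coefficient of $z^{\,2(j-1)+p}$ in the same vertex-operator identity gives
\beq
e_{-2j-p+1}|(j-1)\sqrt{2}\rangle=\tfrac{\sqrt 2}{p}\,a_{-p}|j\sqrt{2}\rangle+\bigl(\text{a weight-}p\text{ polynomial in }a_{-1},\dots,a_{-p+1}\text{ with no linear term}\bigr)|j\sqrt{2}\rangle ,
\ee
i.e. $a_{-p}|j\sqrt{2}\rangle$ equals an element of $W_{j-1}^{\sqrt 2}$ plus a combination of Heisenberg monomials of length $\geq 2$ and weight $p$ applied to $|j\sqrt{2}\rangle$.

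The induction (on $p$, uniformly in $j$) then runs as follows. The base $p\leq 1$ is immediate from the second identity ($a_{-1}|j\sqrt2\rangle=\tfrac1{\sqrt2}e_{-2j}|(j-1)\sqrt2\rangle\in W_{j-1}^{\sqrt 2}$). Assume all Heisenberg monomials of weight $\leq p-1$ (on any $|\ell\sqrt{2}\rangle$) lie in $W^{\sqrt 2}$, and take $a_{-q_1}\cdots a_{-q_r}|j\sqrt{2}\rangle$ of weight $p$. If $r\geq 2$, put $u=a_{-q_2}\cdots a_{-q_r}|j\sqrt{2}\rangle$ of weight $p-q_1\leq p-1$; by the hypothesis $u\in W^{\sqrt 2}$, hence $u\in W_{j-N}^{\sqrt 2}$ for some $N$, so $u=\sum_\alpha g_\alpha|(j-N)\sqrt{2}\rangle$ with $g_\alpha$ $e$-monomials; by the first identity $a_{-q_1}u$ equals a vector of $W_{j-N}^{\sqrt 2}$ plus $\sum_\alpha g_\alpha\bigl(a_{-q_1}|(j-N)\sqrt{2}\rangle\bigr)$, and since $q_1\leq p-1$ each $a_{-q_1}|(j-N)\sqrt{2}\rangle$ lies in $W^{\sqrt 2}$ by the hypothesis; absorbing the finitely many summands into one basic subspace (using nesting) gives $a_{-q_1}u\in W^{\sqrt 2}$. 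If $r=1$, the monomial is $a_{-p}|j\sqrt{2}\rangle$, which by the second identity differs from an element of $W_{j-1}^{\sqrt 2}$ by exactly the $r\geq 2$ case just treated, so again it lies in $W^{\sqrt 2}$. This closes the induction, yields $F_{j\sqrt 2}\subseteq W^{\sqrt 2}$ for all $j$, and hence $V_{\sqrt 2\mathbb Z}=\bigoplus_j F_{j\sqrt 2}\subseteq W^{\sqrt 2}\subseteq V_{\sqrt 2\mathbb Z}$, i.e. $W^{\sqrt 2}=V_{\sqrt 2\mathbb Z}\simeq L_{(0,1)}$.

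The main obstacle is precisely this key step. Each basic subspace $W_\ell^{\sqrt 2}$ is "thin" — it meets each charge sector $F_{\mu}$ in a subspace far smaller than $F_\mu$ — so a Heisenberg creation $a_{-p}|\ell\sqrt{2}\rangle$ with $p\geq 2$ is not visible in $W_\ell^{\sqrt 2}$ or even in $W_{\ell-1}^{\sqrt 2}$; one genuinely has to descend to $W_{\ell-p}^{\sqrt 2}$, and it is the explicit vertex-operator identity together with the nesting $W_{j+1}^{\sqrt 2}\subseteq W_{j}^{\sqrt 2}$ (which permits arbitrarily deep but finite descent, and lets finite sums be collected into a single $W_j^{\sqrt 2}$) that makes the bookkeeping close. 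I would emphasize that this argument uses no linear-independence input whatsoever; linear independence of the resulting spanning family of semi-infinite monomials is the separate ingredient needed afterwards to promote it to a basis in Theorem \ref{maintheorem}.
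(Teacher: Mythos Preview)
Your argument is correct, but it is not the route the paper takes for this particular theorem. The paper's Section~3 proof works through the $\widehat{\mathfrak{sl}_2}'$ structure: it passes to the auxiliary subspace $\widetilde W^{\sqrt 2}=\bigcup_{k\le 0}\mathbb C[e_i\,:\,i\ge 0]\,|k\sqrt 2\rangle$, uses the relation $e^2(z)=0$ to show invariance under $e_{-l}$ (rewriting $e_{-l}e_{-2k+2m-1}$ as a sum with strictly less negative leftmost index), and then bootstraps invariance under $h_n$ and $f_n$ from the bracket relations $[h_n,e_m]=2e_{n+m}$, $[e_n,f_m]=h_{n+m}+n\delta_{n,-m}K$, $[h_n,f_m]=-2f_{n+m}$, finally invoking irreducibility of $L_{(0,1)}$. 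Your proof bypasses both the $e^2(z)=0$ relation and the Lie-algebra bracket bookkeeping entirely: you read off the Fock expansion of $e(z)|(j-1)\sqrt 2\rangle$ and run an induction on Heisenberg weight to get $F_{j\sqrt 2}\subset W^{\sqrt 2}$ directly, so irreducibility is never used. What this buys you is exactly the generality the paper needs later: your argument is, up to notation, the proof the paper gives in Section~5 for arbitrary $V_{\sqrt N\mathbb Z}$ (where no affine Lie algebra is available), specialized to $N=2$. Conversely, the paper's Section~3 argument has the virtue of making the $\widehat{\mathfrak{sl}_2}'$-invariance explicit as stated, rather than obtaining it as a corollary of $W^{\sqrt 2}=V_{\sqrt 2\mathbb Z}$.
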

\begin{proof}
\label{Theorem5.3Proof}
Denote $\widetilde{W}^{\sqrt{2}}_k = \mathbb{C}[e_i\:|\: i\geq 0]\,|k\sqrt{2}\rangle$, $k \leq 0$. For any $k\leq 0$, $\widetilde{W}^{\sqrt{2}}_{k} \subset \widetilde{W}^{\sqrt{2}}_{k - 1}$ and 
\beq
\widetilde{W}^{\sqrt{2}} = \bigcup\limits_{k \leq 0}\,\widetilde{W}_{k}^{\sqrt{2}} \subset W^{\sqrt{2}}.
\ee
If we prove invariance of $\widetilde{W}$ under $\widehat{\mathfrak{sl}_2}'$, then $\widetilde{W}^{\sqrt{2}} = L_{(0, 1)}$ and therefore $W^{\sqrt{2}} = L_{(0, 1)}$. It's clear that $\widetilde{W}^{\sqrt{2}}$ is invariant under $K$ and $e_i$ with $i\geq 0$.
\begin{enumerate}
\item Invariance under $e_{i}$ with $i < 0$.
\\
Due to $[e_i, e_j] = 0$, it's enough to prove that $e_{-l}\,|k\sqrt{2}\rangle \in \widetilde{W}$ for any $k\leq 0,\, l > 0$. 
Choose $m > 0$ s.~t. $-2k + 2m - 1 - l > 0$. 
\beq
\begin{aligned}
&e_{-l}\,|k\sqrt{2}\rangle = e_{-l}\,e_{-2k - 1}e_{-2k + 1}e_{-2k + 3}\ldots e_{-2k + 2m - 1}\,|(k - m)\sqrt{2}\rangle 
\\
&= e_{-2k - 1}e_{-2k + 1}e_{-2k + 3}\ldots e_{-l}\, e_{-2k + 2m - 1}\,|(k - m)\sqrt{2}\rangle.
\end{aligned}
\ee
Then due to $e^2(z) = 0$ relation on $L_{(0, 1)}$:
\beq
\begin{aligned}
&e_{-l}\, e_{-2k + 2m - 1}\,|(k - m)\sqrt{2}\rangle = - (e_{- l - 1}e_{-2k + 2m} + e_{- l - 2}\,e_{2k + 2m + 1} + \ldots \,)\,|(k - m)\sqrt{2}\rangle 
\\
& - (e_{-l + 1}\,e_{-2k + 2m - 2} + e_{-l + 2}\,e_{-2k + 2m - 3} + \ldots + e_{0}\,e_{-2k + 2m - 1 - l}  
\\
&+ \ldots + e_{-2k + 2m - 2}\,e_{-l + 1}) |(k - m)\sqrt{2}\rangle
\\
&- e_{-2k + 2m - 1}e_{-l} |(k - m)\sqrt{2}\rangle - (e_{-2k + 2m}e_{- l - 1} + e_{-2k + 2m + 1}e_{- l - 2})|(k - m)\sqrt{2}\rangle.
\end{aligned}
\ee
Summands
\beq
- (e_{- l - 1}e_{-2k + 2m} + e_{- l - 2}\,e_{2k + 2m + 1} + \ldots\,)\,|(k - m)\sqrt{2}\rangle
\ee
and
\beq
- (e_{-2k + 2m}e_{- l - 1} + e_{-2k + 2m + 1}e_{- l - 2})|(k - m)\sqrt{2}\rangle
\ee
vanish as soon as $e_n\,|(k - m)\sqrt{2}\rangle = 0$ for $n > -2k + 2m - 1$.
Then
\beq
\begin{aligned}
&2\,e_{-l}\, e_{-2k + 2m - 1}\,|(k - m)\sqrt{2}\rangle 
\\
& = - (e_{-l + 1}\,e_{-2k + 2m - 2} + e_{-l + 2}\,e_{-2k + 2m - 3} + \ldots + e_{0}\,e_{-2k + 2m - 1 - l}  
\\
& + \ldots + e_{-2k + 2m - 2}\,e_{-l + 1}) |(k - m)\sqrt{2}\rangle.
\end{aligned}
\ee
Thus if $\widetilde{W}^{\sqrt{2}}$ is invariant under all $e_n$ with $n \geq -l + 1$, then it's invariant under $e_{-l}$. As soon as $\widetilde{W}^{\sqrt{2}}$ is invariant under all $e_n$ with $n\geq 0$, then by induction $\widetilde{W}^{\sqrt{2}}$ is invariant under all $e_n$.
\item Invariance under $h_n$ with $n \geq 0$.
\\
As soon as $$[h_n, e_m] = 2\,e_{n + m}$$ and $h_n|k\sqrt{2}\rangle = 0$ for all $n > 0$ and $h_0|k\sqrt{2}\rangle = 2k\,|k\sqrt{2}\rangle$, $\widetilde{W}^{\sqrt{2}}$ is invariant under $h_n$ with $n \geq 0$.
\item Invariance under $f_n$ with $n \geq 0$.
\\
As soon as $[e_n, f_m] = h_{n + m} + n\,\delta_{n, -m}\,K$ and $f_n|k\sqrt{2}\rangle = 0$ for all $n > 0$ and $k \leq 0$, considering the previous point $\widetilde{W}^{\sqrt{2}}$ is invariant under $f_n$ with $n \geq 0$.
\item Invariance under $h_n$ with $n \leq 0$.
\\
As soon as $\widetilde{W}^{\sqrt{2}}$ is invariant under $e_n, \:\: n\in\mathbb{Z}$ and $f_m, \:\: m \geq 0$, considering $[e_n, f_m] = h_{n + m} + n\,\delta_{n, -m}\,K$ and commuting $e$'s with negative indices and $f$'s with positive indices we obtain invariance of $\widetilde{W}^{\sqrt{2}}$ under all $h_n$.
\item Invariance under $f_n$ with $n \leq 0$.
\\
Consequence of $[h_n, f_m] = - 2\,f_{n + m}$ and points 3 and 4. 
\end{enumerate}
\end{proof}
Then from character considerations (for details \cite{FS}, \cite{Kenzhaev_Alternative_2023},
\cite{Kenzhaev_Durfee_2023}):
$$
W_{j}^{\sqrt{2}}~\simeq~\mathbb{C}^{F}_{1}[e_i \:|\: i \leq -2j - 1].
$$
 $W_{0}^{\sqrt{2}} \simeq \mathbb{C}^{F}_{1}[e_i \:|\: i \leq -1]$ and by the same argument
$$
W_{j}^{\sqrt{2}}~\simeq~\mathbb{C}^{F}_{1}[e_i \:|\: i \leq -2j - 1],
$$
and
\begin{theorem}
$L_{(0, 1)}$ has the basis 
\beq
\left\{\mathbf{e}_a \hspace{2mm}|\hspace{2mm} a\in\Fib^{(1, 1)}_{\infty}\right\},
\ee
where $\mathbf{e}_a = e_{\tau(a)_1}\,e_{\tau(a)_2}\,e_{\tau(a)_3}\ldots$ is semi-infinite monomial.
\end{theorem}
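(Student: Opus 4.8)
The plan is to glue the bases of the basic subspaces $W_j^{\sqrt 2}$ into a single semi-infinite basis and then to re-encode its index set as $\Fib^{(1,1)}_\infty$. By the preceding theorem $L_{(0,1)}=\bigcup_{j\in\mathbb Z}W_j^{\sqrt 2}$ with $W_j^{\sqrt 2}\subset W_{j-1}^{\sqrt 2}$, so this is a nested union, i.e.\ a direct limit of vector spaces. By Lemma \ref{lemmaBasicSubspaceSL} together with the character identification $W_j^{\sqrt 2}\simeq\mathbb C^{F}_1[e_i\mid i\le -2j-1]$ quoted above, each $W_j^{\sqrt 2}$ has the basis
\beq
\mathcal B_j=\{\,e_{i_1}e_{i_2}\cdots e_{i_k}\,|j\sqrt 2\rangle \;:\; i_1<i_2<\cdots<i_k\le -2j-1,\ i_{l+1}-i_l\ge 2\,\},
\ee
i.e.\ the Fibonacci-$1$ monomials in the variables $e_i$, $i\le -2j-1$, applied to the charge-$j$ ground state; spanning is Lemma \ref{lemmaBasicSubspaceSL} and linear independence is precisely what the character identity buys.

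First I would check that the bases $\mathcal B_j$ are compatible with the inclusions $W_j^{\sqrt 2}\hookrightarrow W_{j-1}^{\sqrt 2}$. Applying $e_{-2(j-1)-1}$ to the charge-$(j-1)$ ground state yields the charge-$j$ ground state $|j\sqrt 2\rangle$ up to a nonzero scalar (this is how the successive ground states are reached, as in the proof of the preceding theorem), so a basis monomial $e_{i_1}\cdots e_{i_k}\,|j\sqrt 2\rangle\in\mathcal B_j$ is carried to a scalar multiple of $e_{i_1}\cdots e_{i_k}\,e_{-2j+1}\,|(j-1)\sqrt 2\rangle$; since all $i_l\le -2j-1$ while the appended index $-2j+1=-2(j-1)-1$ lies two steps above $-2j-1$, this is again a Fibonacci-$1$ monomial over $\{e_i\mid i\le -2(j-1)-1\}$, so the inclusion maps $\mathcal B_j$ injectively into $\mathcal B_{j-1}$. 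Hence $L_{(0,1)}$ has as basis the direct limit of the $\mathcal B_j$: the set of semi-infinite monomials $e_{i_1}e_{i_2}e_{i_3}\cdots$ with $i_1<i_2<i_3<\cdots$ and $i_{l+1}-i_l\ge 2$ whose index sequence eventually equals $-2j+1,-2j+3,-2j+5,\ldots$ for some $j\in\mathbb Z$.

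It remains to identify this index set with $\{\tau(a)\mid a\in\Fib^{(1,1)}_\infty\}$. To such a monomial I attach the configuration $a$ with a $1$ in position $-i_m$ for every $m$ and $0$ elsewhere; this is well defined on the direct limit, since the same element viewed in $\mathcal B_{j-1}$ as $e_{i_1}\cdots e_{i_k}e_{-2j+1}|(j-1)\sqrt 2\rangle$ has the same set of $1$-positions, namely $\{-i_1,\ldots,-i_k\}\cup\{2j-1\}\cup\{2j-3,2j-5,\ldots\}=\{-i_1,\ldots,-i_k\}\cup\{2j-1,2j-3,\ldots\}$. The condition $i_{l+1}-i_l\ge 2$ is condition $(1)$ in the definition of a type-$(\theta,1)$ configuration, the finiteness of $\{m:i_m\le 0\}$ is condition $(2)$, and the stabilized tail forces $a_{-n}=1$ exactly when $-n$ is odd for $n\gg 1$, i.e.\ $\theta=1$; so $a\in\Fib^{(1,1)}_\infty$. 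Conversely, given $a\in\Fib^{(1,1)}_\infty$, one chooses $j$ sufficiently negative that $\{2j-1,2j-3,\ldots\}\subseteq\{p:a_p=1\}$, whereupon the remaining $1$'s of $a$ form a finite set lying (by the Fibonacci condition) in positions $\ge 2j+1$, hence negate to a Fibonacci-$1$ monomial over $\{e_i\mid i\le -2j-1\}$ recovering our monomial. Since $\tau(a)$ lists the negated $1$-positions of $a$ in increasing order, the monomial attached to $a$ is exactly $e_{\tau(a)_1}e_{\tau(a)_2}e_{\tau(a)_3}\cdots=\mathbf e_a$, and therefore $\{\mathbf e_a\mid a\in\Fib^{(1,1)}_\infty\}$ is a basis of $L_{(0,1)}$.

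The only ingredient that is not pure bookkeeping is the linear independence of the Fibonacci-$1$ monomials inside each $W_j^{\sqrt 2}$, which I would take from the character identity already established above; the step that needs the most care is the compatibility of the bases $\mathcal B_j$ under $W_j^{\sqrt 2}\hookrightarrow W_{j-1}^{\sqrt 2}$ and, with it, the well-definedness of the encoding by $\Fib^{(1,1)}_\infty$.
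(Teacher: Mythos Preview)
Your proposal is correct and takes essentially the same approach as the paper, which defers this theorem to the main theorem (proved in Section~5 via the basic-subspace spanning lemma, the equality $W=V$, the identification $|j\sqrt{2}\rangle\leftrightarrow e_{-2j+1}e_{-2j+3}\cdots$, and a character comparison for linear independence). Your write-up is more explicit than the paper's about the compatibility of the bases $\mathcal B_j$ under the inclusions $W_j^{\sqrt 2}\hookrightarrow W_{j-1}^{\sqrt 2}$, which the paper absorbs into the phrase ``with natural identification''.
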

\begin{proof}
This theorem is a particular case of Main Theorem \ref{maintheorem}.
\end{proof}

\section{Structure of the basic subspaces}
\begin{definition}
\emph{Basic subspaces} of $V_{\sqrt{2N}\,\mathbb{Z}}$ are 
\beq
W_{j}^{\sqrt{2N}} \myeq \mathbb{C}[e_i \:|\: i\in\mathbb{Z}]\,|j\sqrt{2N}\rangle,\quad j\in\mathbb{Z}.
\ee
\emph{Basic subspaces} of $V_{\sqrt{2N + 1}\,\mathbb{Z}}$ are
\beq
W_{j}^{\sqrt{2N + 1}} \myeq \mathbb{C}\{\theta_i \:|\: i\in\mathbb{Z}\}\,|j\sqrt{2N + 1}\rangle,\quad j\in\mathbb{Z}.
\ee
\end{definition}

\begin{definition}
Consider polynomial ring $\mathbb{C}[x_i \:|\: i\in\mathbb{Z}]$. Monomial $x_{j_1}\,x_{j_2}\ldots x_{j_k} $ with ${j_1 < j_2 < \cdots < j_k}$ is called \emph{Fibonacci-l monomial} if $j_m - j_{m - 1} > l$ for any ${m\in\{2, 3 \ldots, k\}}$. 
Polynomial is called Fibonacci-l if it is a linear combination of Fibonacci monomials. Linear space of Fibonacci-l polynomials is denoted as $\mathbb{C}^{F}_{l}[x_i]$. Linear space of anti-symmetric Fibonacci-l polynomials is denoted as $\mathbb{C}^{F}_{l}\{x_i\}$. There is natural bigradation on these spaces
\beq
\begin{aligned}
& \deg_z \left(x_{j_1}\,x_{j_2}\ldots x_{j_k}\right) = k,
\\
& \deg_q \left(x_{j_1}\,x_{j_2}\ldots x_{j_k}\right) = - j_1 - j_2 - \ldots - j_k.
\end{aligned}
\ee
\end{definition}

\begin{lemma}
\label{LemmaBasicSubspace2}
\beq
\begin{aligned}
&W_{j}^{\sqrt{2N}} \myeq \mathbb{C}[e_i \:|\: i\in\mathbb{Z}]\,|j\sqrt{2N}\rangle = \mathbb{C}[e_i \:|\: i \leq -2Nj - N]\,|j\sqrt{2N}\rangle 
\\
& = \mathbb{C}^{F}_{2N - 1}[e_i \:|\: i \leq -2Nj - N]\,|j\sqrt{2N}\rangle.
\end{aligned}
\ee
\beq
\begin{aligned}
&W_{j}^{\sqrt{2N + 1}} \myeq \mathbb{C}\{\theta_i \:|\: i\in\mathbb{Z}\}\,|j\sqrt{2N + 1}\rangle = \mathbb{C}\{\theta_i \:|\: i \leq -j(2N + 1)\}\,|j\sqrt{2N + 1}\rangle 
\\
& = \mathbb{C}^{F}_{2N}\{\theta_i \:|\: i \leq -j(2N + 1)\}\,|j\sqrt{2N + 1}\rangle.
\end{aligned}
\ee
\end{lemma}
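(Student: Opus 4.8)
The plan is to follow the proof of Lemma~\ref{lemmaBasicSubspaceSL} (which is precisely the case $N=1$ of the first assertion) and to promote its three steps --- a truncation, the passage from the field relations \eqref{RelationsEven} and \eqref{RelationsOdd} to a workable system of mode relations, and a double induction --- to general $N$ and to both parities. \emph{Step 1 (truncation).} From the explicit formula for $Y(|\sqrt{2N}\rangle,z)$ one reads off that $e(z)\,|j\sqrt{2N}\rangle$ is a power series in $z$ whose lowest term is $z^{2Nj}$: the operator $z^{\sqrt{2N}a_0}$ turns $|j\sqrt{2N}\rangle$ into $z^{2Nj}|j\sqrt{2N}\rangle$, the annihilation exponential acts as the identity on $|j\sqrt{2N}\rangle$, and the creation exponential (and $e^{\sqrt{2N}q}$) contribute only non-negative powers of $z$. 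Comparing with $e(z)=\sum_n e_n z^{-n-N}$ forces $e_n|j\sqrt{2N}\rangle=0$ for $n>-2Nj-N$; since the $e_i$ commute, any monomial containing such a mode dies on $|j\sqrt{2N}\rangle$, which is the first equality of the lemma. The same computation for $\theta(z)=\sum_n\theta_n z^{-n}$, together with $\theta_i\theta_k=-\theta_k\theta_i$ (hence $\theta_i^{2}=0$), gives $W_j^{\sqrt{2N+1}}=\mathbb{C}\{\theta_i\mid i\le -j(2N+1)\}\,|j\sqrt{2N+1}\rangle$.

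\emph{Step 2 (mode relations).} Expanding $e\,e^{(2k)}(z)=0$, $k=0,\dots,N-1$, into Fourier modes and symmetrising in the commuting pair $e_m,e_n$ (which is also why the odd derivatives $e\,e^{(2k+1)}(z)=0$ carry no new information) one obtains, for every $K\in\mathbb{Z}$, the $N$ relations $\sum_{m+n=K}(m-n)^{2k}e_me_n=0$. For fixed $K$ these form a linear system in the near-diagonal products $e_L^2,\,e_{L-1}e_{L+1},\,e_{L-2}e_{L+2},\dots$; after removing the $k=0$ row the coefficient matrix on the first $N$ of these products is Vandermonde in the squared gaps, hence invertible on every finite truncation. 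This rewrites each product $e_ae_b$ with $|a-b|<2N$ as a locally finite combination of products $e_{a'}e_{b'}$ with $a'+b'=a+b$ and $|a'-b'|\ge 2N$, in which the smaller index is strictly more negative than $\min(a,b)$. Running the same argument with $\theta\,\theta^{(2k-1)}(z)=0$, $k=1,\dots,N$, and anti-symmetrisation yields $\sum_{m+n=K}(m-n)^{2k-1}\theta_m\theta_n=0$ and clears every $\theta_a\theta_b$ with $|a-b|\le 2N$ in favour of products of gap $\ge 2N+1$.

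\emph{Step 3 (reduction).} After an index shift we may assume $j=0$, so $W_0^{\sqrt{2N}}=\mathbb{C}[e_i\mid i\le -N]\,|0\rangle$, and we argue by induction on $\deg_q$ (outer) and by decreasing induction on the largest absolute index $m(g)$ of a monomial $g$ (inner), exactly as in Lemma~\ref{lemmaBasicSubspaceSL}; the outer base case is the constant monomial and the inner base case $m(g)=\deg_q(g)$ forces $g=e_{-\deg_q(g)}$, already Fibonacci. Given $g$ with $\deg_q(g)=n$, $m(g)=i_1$, write $g=e_{-i_1}\tilde g$ and use the outer hypothesis to reduce $\tilde g$ (smaller $\deg_q$) to $\sum_\alpha c_\alpha F_\alpha$ with each $F_\alpha$ a Fibonacci-$(2N-1)$ monomial. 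For each $\alpha$: if $F_\alpha$ contains an index below $-i_1$ then $e_{-i_1}F_\alpha$ has $m>i_1$ and is handled by the inner hypothesis; if every index of $F_\alpha$ lies at distance $\ge 2N$ above $-i_1$ then $e_{-i_1}F_\alpha$ is itself Fibonacci-$(2N-1)$; otherwise $e_{-i_1}$ and the smallest index of $F_\alpha$ form a pair of gap $<2N$, and applying the mode relations of Step 2 to that pair expresses $e_{-i_1}F_\alpha$ as a combination of monomials each of which either contains a mode $e_i$ with $i>-N$ (hence vanishes on $|0\rangle$) or has $m>i_1$ (hence is handled by the inner hypothesis). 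Both inductions close and give $W_0^{\sqrt{2N}}=\mathbb{C}^F_{2N-1}[e_i\mid i\le -N]\,|0\rangle$; the general $j$ follows by the index shift, and the odd case is word-for-word the same, with $\theta_i^2=0$ removing the need to treat squares and the only extra bookkeeping being the signs produced by reordering the $\theta_i$.

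The genuinely new point, and the one I expect to be the main obstacle, is Step 2: one must check that the $N$ prescribed field relations already annihilate \emph{all} near-diagonal products (which is exactly what the Vandermonde non-degeneracy provides) and that the resulting semi-infinite tails act with only finitely many nonzero terms on each basic subspace, so that the reduction in Step 3 terminates. The remaining combinatorics in Step 3 --- verifying that no monomial re-emerges with the same or smaller $m$, in particular when $-i_1$ sits between two indices of $F_\alpha$ that are only $2N$ apart --- is routine but should be done with some care.
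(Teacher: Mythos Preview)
Your proposal is correct and follows essentially the same route as the paper: truncation from the vacuum action, extraction of mode relations from \eqref{RelationsEven}/\eqref{RelationsOdd}, a Vandermonde argument showing the leading $N\times N$ block is nondegenerate, and then the double induction of Lemma~\ref{lemmaBasicSubspaceSL}. The only cosmetic difference is that the paper writes the mode relations via the rising factorials $H_k$ and shows $H_{2l}(N+n-k)+H_{2l}(N+n+k)$ is an even polynomial in $k$ (resp.\ $H_{2l-1}(n+k)-H_{2l-1}(n-k)$ odd), whereas you pass directly to the equivalent basis $(m-n)^{2k}$ (resp.\ $(m-n)^{2k-1}$); both reduce to the same (even/odd) Vandermonde determinant.

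One small slip to fix: in Step~2 you should \emph{not} remove the $k=0$ row. With the $N$ rows $k=0,\dots,N-1$ and the first $N$ gap-columns, the matrix $\bigl(g_j^{2k}\bigr)$ is already the Vandermonde matrix in the distinct squares $g_0^2,\dots,g_{N-1}^2$ and hence invertible; dropping a row would leave only $N-1$ equations for $N$ unknowns. (In the odd case your count is fine: rows $k=1,\dots,N$ with entries $g_j^{2k-1}=g_j\,(g_j^2)^{k-1}$ give $\mathrm{diag}(g_j)$ times a Vandermonde in $g_j^2$.) Also, strictly speaking the raw mode expansion of $e\,e^{(2k)}(z)=0$ produces an even polynomial of degree $2k$ in $m-n$, not literally $(m-n)^{2k}$; but since these polynomials for $k=0,\dots,N-1$ span the even polynomials of degree $\le 2N-2$, your basis choice is legitimate.
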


\begin{proof}
1. Even case.
\\\\
According to relations \eqref{RelationsEven}, in case of $V_{\sqrt{2N}\mathbb{Z}}$ vertex algebra there is a series of relations on $e(z)$:
\beq
\label{RelationsEven2N}
e^2(z) = 0, \quad e\cdot e''(z) = 0, \quad \ldots\,, \quad e\cdot e^{(2N - 2)}(z) = 0.  
\ee
Denote $x^{(k)} = x(x + 1)(x + 2)\ldots(x + k - 1)$.
These relations can be written as series of relations
\beq
\label{EvenNN}
\begin{aligned}
&\frac{1}{2}\,e_n\,e_n + \sum\limits_{k = 1}^{\infty}\,e_{n - k}\,e_{n + k}  = 0,
\\
&(N + n)^{(2)}\,e_n\,e_n + \sum\limits_{k = 1}^{\infty} \left((N + n + k)^{(2)} + (N + n - k)^{(2)}\right)\,e_{n - k}\,e_{n + k} = 0,
\\
&(n + N)^{(4)}\,e_n\,e_n + \sum\limits_{k = 1}^{\infty} \left((N + n + k)^{(4)} + (N + n - k)^{(4)}\right)\,e_{n - k}\,e_{n + k} = 0,
\\
&\hspace{5cm}\ldots
\\
&(N + n)^{(2N - 2)}\,e_n\,e_n + \sum\limits_{k = 1}^{\infty} \left((N + n + k)^{(2N - 2)} + (2N - k)^{(2N - 2)}\right)\,e_{n - k}\,e_{n + k} = 0,
\end{aligned}
\ee
for $n\in\mathbb{Z}$ and
\beq
\label{EvenNN+1}
\begin{aligned}
&\sum\limits_{k = 0}^{\infty} \,e_{n - k}\,e_{n + 1 + k} = 0,
\\
&\sum\limits_{k = 0}^{\infty} \, \left((N + n - k)^{(2)} + (N + n + k + 1)^{(2)}\right)\,e_{n - k}\,e_{n + 1 + k} = 0,
\\
&\sum\limits_{k = 0}^{\infty} \, \left((N + n - k)^{(4)} + (N + n + k + 1)^{(4)}\right)\,e_{n - k}\,e_{n + 1 + k} = 0,
\\
&\hspace{5cm}\ldots
\\
&\sum\limits_{k = 0}^{\infty} \, \left((N + n - k)^{(2N - 2)} + (N + n + k + 1)^{(2N - 2)}\right)\,e_{n - k}\,e_{n + 1 + k} = 0,
\end{aligned}
\ee
for $n\in\mathbb{Z}$.
\\\\
In matrix form expressions \eqref{EvenNN} are
\begin{tiny}
\beq
\renewcommand{\arraystretch}{2.8}
\begin{pmatrix}
2^{-1} & 1 & 1 & \ldots & \\
2^{-1} \left((N + n)^{(2)} + (N + n)^{(2)}\right) & (N + n + 1)^{(2)} + (N + n - 1)^{(2)} & (N + n + 2)^{(2)} + (N + n - 2)^{(2)} & \ldots & \\
2^{-1}\left((N + n)^{(4)} + (N + n)^{(4)}\right) & (N + n + 1)^{(4)} + (N + n - 1)^{(4)} & (N + n + 2)^{(4)} + (N + n - 2)^{(4)} & \ldots & \\
\vdots & \vdots & \vdots & \vdots & \\
2^{-1} \left((N + n)^{(2N - 2)} + (N + n)^{(2N - 2)}\right) & (N + n + 1)^{(2N - 2)} + (N + n - 1)^{(2N - 2)} & (N + n + 2)^{(2N - 2)} + (N + n - 2)^{(2N - 2)} & \ldots &
\end{pmatrix}
\begin{pmatrix}
e_n\,e_n\\
e_{n - 1}\,e_{n + 1}\\
e_{n - 2}\,e_{n + 2}\\
\vdots \\
\vdots
\end{pmatrix}
= 0.
\ee
\end{tiny}
Leading principal $N\times N$ minor of this matrix is nondegenerate. Indeed,\\ $(N + n - k)^{(2l)} + (N + n + k)^{(2l)} = 2k^{2l} + O(k^{2l - 2})$ is even polynomial in $k$. Then by the elementary transformations determinant of this minor may be reduced to even-Vandermonde determinant
\beq
\label{EvenVandermonde}
\begin{vmatrix}
1 & 1 & \ldots & 1\\
x_0^2 & x_1^2 & \ldots & x_{N - 1}^2\\
x_0^4 & x_1^4 & \ldots & x_{N - 1}^4\\
\vdots & \vdots & \vdots & \vdots\\
x_0^{2N - 2} & x_1^{2N - 2} & \ldots & x_{N - 1}^{2N - 2}
\end{vmatrix}
\ee
at $x_i = i$.
\\\\
In matrix form expressions \eqref{EvenNN+1} are
\begin{tiny}
\beq
\renewcommand{\arraystretch}{2.5}
\begin{pmatrix}
1 & 1 & 1 & \ldots \\
(N + n)^{(2)} + (N + n + 1)^{(2)} & (N + n - 1)^{(2)} + (N + n + 2)^{(2)} & (N + n - 2)^{(2)} + (N + n + 3)^{(2)} & \ldots & \\
(N + n)^{(4)} + (N + n + 1)^{(4)} & (N + n - 1)^{(4)} + (N + n + 2)^{(4)} & (N + n - 2)^{(4)} + (N + n + 3)^{(4)} & \ldots & \\
\vdots & \vdots & \vdots & \vdots &\\
(N + n)^{(2N - 2)} + (N + n + 1)^{(2N - 2)} & (N + n - 1)^{(2N - 2)} + (N + n + 2)^{(2N - 2)} & (N + n - 2)^{(2N - 2)} + (N + n + 3)^{(2N - 2)} & \ldots &
\end{pmatrix}
\begin{pmatrix}
e_n\,e_{n + 1}\\
e_{n - 1}\,e_{n + 2}\\
e_{n - 2}\,e_{n + 3}\\
\vdots \\
\vdots 
\end{pmatrix}
= 0.
\ee
\end{tiny}
Leading principal $N\times N$ minor of this matrix is nondegenerate. \\Indeed, $(N + n - k)^{(2l)} + (N + n + k + 1)^{(2l)} = 2\left(k + \frac{1}{2}\right)^{2l} + O((k + \frac{1}{2})^{2l - 2})$ is even polynomial in $k + \frac{1}{2}$. Then by the elementary transformations determinant of this minor may be reduced to even-Vandermonde determinant \eqref{EvenVandermonde} at $x_i = i + \frac{1}{2}$.
\\
Nondegeneracy of leading principal $N\times N$ minor means that any monomial of the form $e_{n}\,e_{n + k}$ with $|k| \leq 2N - 1$ may be reduced to infinite linear combination of $e_i\,e_j$ with $|i - j| > 2N - 1$ with relations \eqref{RelationsEven2N}, which action is correctly defined. Applying argument from the \hyperref[Lemma5.1Proof]{proof} of Lemma \ref{lemmaBasicSubspaceSL} we get
\beq
W_{j}^{\sqrt{2N}}  = \mathbb{C}^{F}_{2N - 1}[e_i \:|\: i \leq -2Nj - N]\,|j\sqrt{2N}\rangle.
\ee
1. Odd case.
\\\\
According to  Theorem \ref{RelationsOdd}, in case of $V_{\sqrt{2N + 1}\,\mathbb{Z}}$ vertex algebra there is a series of relations on $\theta(z)$:
\beq
\label{RelationsOdd2N}
\theta\cdot \theta'(z) = 0, \quad \theta\cdot \theta'''(z) = 0, \quad \ldots\,, \quad \theta\cdot \theta^{(2N - 1)}(z) = 0.  
\ee
These relations can be written as series of relations
\beq
\label{OddNN}
\begin{aligned}
&n^{(1)}\theta_n\,\theta_n + \sum\limits_{k = 1}^{\infty}\,\left((n + k)^{(1)} - (n - k)^{(1)}\right)\theta_{n - k}\,\theta_{n + k}  = 0,
\\
&n^{(3)}\,\theta_n\,\theta_n + \sum\limits_{k = 1}^{\infty} \left((n + k)^{(3)} - (n - k)^{(3)}\right)\,\theta_{n - k}\,\theta_{n + k} = 0,
\\
&n^{(5)}\,\theta_n\,\theta_n + \sum\limits_{k = 1}^{\infty} \left((n + k)^{(5)} - (n - k)^{(5)}\right)\,\theta_{n - k}\,\theta_{n + k} = 0,
\\
&\hspace{4cm}\ldots
\\
&n^{(2N - 1)}\,\theta_n\,\theta_n + \sum\limits_{k = 1}^{\infty} \left((n + k)^{(2N - 1)} - (n - k)^{(2N - 1)}\right)\,\theta_{n - k}\,\theta_{n + k} = 0,
\end{aligned}
\ee
for $n\in\mathbb{Z}$ and
\beq
\label{OddNN+1}
\begin{aligned}
&\sum\limits_{k = 0}^{\infty} \,\left((n + 1 + k)^{(1)} - (n - k)^{(1)}\right) \theta_{n - k}\,\theta_{n + 1 + k} = 0,
\\
&\sum\limits_{k = 0}^{\infty} \,\left((n + 1 + k)^{(3)} - (n - k)^{(3)}\right) \theta_{n - k}\,\theta_{n + 1 + k} = 0,
\\
&\sum\limits_{k = 0}^{\infty} \,\left((n + 1 + k)^{(5)} - (n - k)^{(5)}\right) \theta_{n - k}\,\theta_{n + 1 + k} = 0,
\\
&\hspace{4cm}\ldots
\\
&\sum\limits_{k = 0}^{\infty} \,\left((n + 1 + k)^{(2N - 1)} - (n - k)^{(2N - 1)}\right) \theta_{n - k}\,\theta_{n + 1 + k} = 0,
\end{aligned}
\ee
for $n\in\mathbb{Z}$.
\\\\
Due to anticommutativity $\theta_n^2 = 0$. Then expressions \eqref{OddNN} in matrix form are
\begin{scriptsize}
\beq
\renewcommand{\arraystretch}{2.5}
\begin{pmatrix}
(n + 1)^{(1)} - (n - 1)^{(1)} & (n + 2)^{(1)} - (n - 2)^{(1)} & (n + 3)^{(1)} - (n - 3)^{(1)} & \ldots & \\
(n + 1)^{(3)} - (n - 1)^{(3)} & (n + 2)^{(3)} - (n - 2)^{(3)} & (n + 3)^{(3)} - (n - 3)^{(3)} & \ldots & \\
(n + 1)^{(5)} - (n - 1)^{(5)} & (n + 2)^{(5)} - (n - 2)^{(5)} & (n + 3)^{(5)} - (n - 3)^{(5)} & \ldots & \\
\vdots & \vdots & \vdots & \vdots & \\
(n + 1)^{(2N - 1)} - (n - 1)^{(2N - 1)} & (n + 2)^{(2N - 1)} - (n - 2)^{(2N - 1)} & (n + 3)^{(2N - 1)} - (n - 3)^{(2N - 1)} & \ldots & \\
\end{pmatrix}
\begin{pmatrix}
\theta_{n - 1}\,\theta_{n + 1}\\
\theta_{n - 2}\,\theta_{n + 2}\\
\theta_{n - 3}\,\theta_{n + 3}\\
\vdots \\
\vdots 
\end{pmatrix}
= 0.
\ee
\end{scriptsize}
Leading principal $N\times N$ minor of this matrix is nondegenerate. Indeed, $(n + k)^{(2l - 1)} - (n - k)^{(2l - 1)} = 2k^{2l - 1} + O(k^{2l - 3})$ is odd polynomial in $k$. Then by elementary transformations determinant of this minor may be reduced to odd-Vandermonde determinant
\beq
\label{OddVandermonde}
\begin{vmatrix}
x_0 & x_1 & \ldots & x_{N - 1}\\
x_0^3 & x_1^3 & \ldots & x_{N - 1}^3\\
x_0^5 & x_1^5 & \ldots & x_{N - 1}^5\\
\vdots & \vdots & \vdots & \vdots\\
x_0^{2N - 1} & x_1^{2N - 1} & \ldots & x_{N - 1}^{2N - 1}
\end{vmatrix}
\ee
at $x_i = i + 1$.
\\\\
In matrix form expressions \eqref{OddNN+1} are
\begin{scriptsize}
\beq
\renewcommand{\arraystretch}{2.5}
\begin{pmatrix}
(n + 1)^{(1)} - n^{(1)} & (n + 2)^{(1)} - (n - 1)^{(1)} & (n + 3)^{(1)} - (n - 2)^{(1)} & \ldots & \\
(n + 1)^{(3)} - n^{(3)} & (n + 2)^{(3)} - (n - 1)^{(3)} & (n + 3)^{(3)} - (n - 2)^{(3)} & \ldots & \\
(n + 1)^{(5)} - n^{(5)} & (n + 2)^{(5)} - (n - 1)^{(5)} & (n + 3)^{(5)} - (n - 2)^{(5)} & \ldots & \\
\vdots & \vdots & \vdots & \vdots \\
(n + 1)^{(2N - 1)} - n^{(2N - 1)} & (n + 2)^{(2N - 1)} - (n - 1)^{(2N - 1)} & (n + 3)^{(2N - 1)} - (n - 2)^{(2N - 1)} & \ldots & \\
\end{pmatrix}
\begin{pmatrix}
\theta_n\,\theta_{n + 1}\\
\theta_{n - 1}\,\theta_{n + 2}\\
\theta_{n - 2}\,\theta_{n + 3}\\
\vdots \\
\vdots 
\end{pmatrix}
= 0.
\ee
\end{scriptsize}
Leading principal $N\times N$ minor of this matrix is nondegenerate. Indeed, $(n + 1 + k)^{(2l - 1)} - (n - k)^{(2l - 1)} = 2\left(k + \frac{1}{2}\right)^{2l - 1} + O\left(\left(k + \frac{1}{2}\right)^{2l - 3}\right)$ is odd polynomial in $k + \frac{1}{2}$. Then by elementary transformations determinant of this minor may be reduced to odd-Vandermonde determinant \eqref{OddVandermonde} at $x_i = i + \frac{1}{2}$.
\\
Nondegeneracy of leading principal $N\times N$ minor means that any monomial of the form $\theta_{n}\,\theta_{n + k}$ with $0 < |k| \leq 2N$ can be reduced to infinite linear combination of $\theta_i\,\theta_j$ with $|i - j| > 2N$ with relations \eqref{RelationsOdd2N}, which action is correctly defined. Applying argument from the \hyperref[Lemma5.1Proof]{proof} of Lemma \ref{lemmaBasicSubspaceSL} we get
\beq
W_{j}^{\sqrt{2N + 1}} = \mathbb{C}^{F}_{2N}\{\theta_i \:|\: i \leq -j(2N + 1)\}\,|j\sqrt{2N + 1}\rangle.
\ee

\end{proof}

\section{Semi-infinite construction of $V_{\sqrt{N}\mathbb{Z}}$}

\begin{theorem}
~\
    \begin{enumerate}
        \item Even case.
        \\\\
        Let $W^{\sqrt{2N}} = \bigcup\limits_{j\in\mathbb{Z}}\, W^{\sqrt{2N}}_j$. $W^{\sqrt{2N}}$ is invariant under modes of $a(z)$. In particular, $W^{\sqrt{2N}} = V_{\sqrt{2N}\,\mathbb{Z}}$. 
        \item Odd case.
        \\\\
         Let $W^{\sqrt{2N + 1}} = \bigcup\limits_{j\in\mathbb{Z}}\, W^{\sqrt{2N + 1}}_j$. $W^{\sqrt{2N + 1}}$ is invariant under modes of $a(z)$. In particular, ${W^{\sqrt{2N + 1}} = V_{\sqrt{2N + 1}\,\mathbb{Z}}}$.
    \end{enumerate}
\end{theorem}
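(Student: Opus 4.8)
The plan is to deduce the invariance of $W:=W^{\sqrt{2N}}=\bigcup_{j\in\mathbb{Z}}W_j^{\sqrt{2N}}$ under all Heisenberg modes $a_n$ from a single statement about the Fock vacua $|j\sqrt{2N}\rangle$, proved by a nested induction; the odd case is identical after replacing $e(z)$ by $\theta(z)$ (whose modes anticommute, so $W_j^{\sqrt{2N+1}}$ is the span of the finite products $\theta_{i_1}\cdots\theta_{i_k}|j\sqrt{2N+1}\rangle$, still closed under the $\theta_m$) and $2N$ by $2N+1$. I will use two elementary facts about $e(z)=V_{\sqrt{2N}}(z)=\sum_m e_m z^{-m-N}$. \emph{(a)} The Heisenberg modes act on the modes of $e(z)$ by shifts, $[a_n,e_m]=c_n\,e_{m+n}$ with $c_n\neq0$; in particular each $W_j$ (hence $W$) is stable under all $e_m$, and $W$ is stable under $a_n$ for $n\ge0$, since $a_n|j\sqrt{2N}\rangle=0$ for $n>0$ and $a_0$ acts on each Fock sector as a scalar. \emph{(b)} Feeding $|(j-1)\sqrt{2N}\rangle$ into the explicit formula for $Y(|\sqrt{2N}\rangle,z)$ gives
\[
e(z)\,|(j-1)\sqrt{2N}\rangle \;=\; z^{\,2N(j-1)}\exp\!\Big(\sqrt{2N}\sum_{r\ge1}\tfrac{a_{-r}}{r}z^{r}\Big)\,|j\sqrt{2N}\rangle,
\]
so comparing the coefficient of $z^{s}$ ($s\ge0$) yields
\[
e_{-2Nj+N-s}\,|(j-1)\sqrt{2N}\rangle \;=\; \tfrac{\sqrt{2N}}{s}\,a_{-s}\,|j\sqrt{2N}\rangle \;+\; \sum\gamma\,a_{-r_1}\cdots a_{-r_\ell}\,|j\sqrt{2N}\rangle,
\]
where the sum runs over $\ell\ge2$ and $r_1+\cdots+r_\ell=s$ with each $r_i<s$, and $\gamma$ are scalars; for $s=0$ this reads $|j\sqrt{2N}\rangle=e_{-2Nj+N}|(j-1)\sqrt{2N}\rangle$, which also re-proves $W_j^{\sqrt{2N}}\subseteq W_{j-1}^{\sqrt{2N}}$, and makes it clear that $W\subseteq V_{\sqrt{2N}\mathbb{Z}}$.

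\emph{Step 1 (reduction).} I claim: if $a_{-r}|j'\sqrt{2N}\rangle\in W$ for every $j'\in\mathbb{Z}$ and every $r\in\{1,\dots,n\}$, then $W$ is stable under $a_{-1},\dots,a_{-n}$. Indeed, for $w=e_{i_1}\cdots e_{i_k}|j\sqrt{2N}\rangle$ the Leibniz rule together with fact \emph{(a)} gives
\[
a_{-r}\,w \;=\; c_{-r}\sum_{l=1}^{k}e_{i_1}\cdots e_{i_l-r}\cdots e_{i_k}\,|j\sqrt{2N}\rangle \;+\; e_{i_1}\cdots e_{i_k}\,\big(a_{-r}|j\sqrt{2N}\rangle\big),
\]
and the first sum lies in $W_j$; since each $W_{j'}$ is $e$-stable and $a_{-r}|j\sqrt{2N}\rangle\in W$ by hypothesis, the second summand lies in $W$ too. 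Thus everything reduces to the assertion $(\star)$: $a_{-s}|j\sqrt{2N}\rangle\in W$ for all $s\ge1$, $j\in\mathbb{Z}$.

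\emph{Step 2 ($(\star)$ by induction on $s$) and conclusion.} For $s=1$, fact \emph{(b)} gives $a_{-1}|j\sqrt{2N}\rangle=\tfrac{1}{\sqrt{2N}}e_{-2Nj+N-1}|(j-1)\sqrt{2N}\rangle\in W_{j-1}\subseteq W$ for every $j$. Assume $(\star)$ for all $s'<s$. By Step 1 with $n=s-1$, $W$ is stable under $a_{-1},\dots,a_{-(s-1)}$; combined with $|j\sqrt{2N}\rangle\in W$ this puts every $a_{-r_1}\cdots a_{-r_\ell}|j\sqrt{2N}\rangle$ with all $r_i<s$ into $W$. In the second identity of fact \emph{(b)} the left side lies in $W_{j-1}\subseteq W$ and the $\ell\ge2$ sum lies in $W$; since the coefficient of $a_{-s}|j\sqrt{2N}\rangle$ is the nonzero scalar $\tfrac{\sqrt{2N}}{s}$, solving for it gives $a_{-s}|j\sqrt{2N}\rangle\in W$. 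This proves $(\star)$; by Step 1 again, $W$ is then stable under every $a_{-r}$ ($r\ge1$), hence, by fact \emph{(a)}, under every mode of $a(z)$. Finally, since $|j\sqrt{2N}\rangle\in W$ and $F_{j\sqrt{2N}}$ is spanned by the monomials $a_{-r_1}\cdots a_{-r_k}|j\sqrt{2N}\rangle$, we get $F_{j\sqrt{2N}}\subseteq W$ for every $j$, so $V_{\sqrt{2N}\mathbb{Z}}=\bigoplus_j F_{j\sqrt{2N}}\subseteq W\subseteq V_{\sqrt{2N}\mathbb{Z}}$, i.e.\ $W=V_{\sqrt{2N}\mathbb{Z}}$; the odd case is word for word the same.

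The part I expect to demand the most care is Step 2: the induction is nested, in that the hypothesis $(\star)$ for $s'<s$ is used only through Step 1, to upgrade the bare membership $a_{-s'}|j\sqrt{2N}\rangle\in W$ (for all $j$) to full $a_{-s'}$-stability of $W$ — which is exactly what allows the lower-order Heisenberg words in fact \emph{(b)} to be recognized as elements of $W$. A secondary point requiring care is pinning down the precise shape of fact \emph{(b)}, in particular that the coefficient of the single mode $a_{-s}$ is nonzero; this comes from expanding the vertex-operator exponential, once the sign and normalization conventions for $a(z)$ and $Y(|\mu\rangle,z)$ are fixed.
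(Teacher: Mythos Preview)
Your proof is correct and follows essentially the same route as the paper: show by induction on $s$ that $a_{-s}|j\sqrt{2N}\rangle\in W$ using the vector $e_{-2Nj+N-s}|(j-1)\sqrt{2N}\rangle\in W_{j-1}$, then propagate $a_{-s}$-stability through arbitrary $e$-monomials via $[a_n,e_m]=\sqrt{2N}\,e_{m+n}$. The only difference is cosmetic: the paper detects the nonzero $a_{-s}$-component of $e_{-2Nj+N-s}|(j-1)\sqrt{2N}\rangle$ by pairing with $a_s$, whereas you read it off directly from the expansion of the vertex-operator exponential (your fact \emph{(b)}); your formulation of Step~1 as a stand-alone reduction and your explicit acknowledgment of the nested induction are also somewhat cleaner than the paper's presentation, but the underlying argument is the same.
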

\begin{proof}
    Proofs of odd and even case are essentially the same. Consider even case
    \beq
    W^{\sqrt{2N}} = \bigcup\limits_{j\in\mathbb{Z}}\, W^{\sqrt{2N}}_j \subset V_{\sqrt{2N}\,\mathbb{Z}}.
    \ee
$W^{\sqrt{2N}}$ is invariant under $e_i$'s by its construction. For any ${j\in\mathbb{Z} \;\; a_{-1}|j\sqrt{2N}\rangle\in W^{\sqrt{2N}}}$. Indeed, consider vector
\beq
e_{-2Nj + N - 1}|(j - 1)\sqrt{2N}\rangle \in W^{\sqrt{2N}}_{j - 1}.
\ee
\beq
\begin{aligned}
&a_1 e_{-2Nj + N - 1}|(j - 1)\sqrt{2N}\rangle = [a_1, e_{-2Nj + N - 1}]|(j - 1)\sqrt{2N}\rangle
\\
& = \sqrt{2N}\,e_{-2Nj + N} |(j - 1)\sqrt{2N}\rangle = \sqrt{2N}\,|j\sqrt{2N}\rangle,
\end{aligned}
\ee
which means 
\beq
e_{-2Nj + N - 1}|(j - 1)\sqrt{2N}\rangle = \sqrt{2N} a_{-1}  |j\sqrt{2N}\rangle. 
\ee
Then with commutation relations $[a_i, e_j] = \sqrt{2N}\,e_{i + j}$ we get invariance of $W^{\sqrt{2N}}$ under $a_i$ with $i \geq -1$.
\\
For any $j\in\mathbb{Z}$\: $a_{-2}|j\sqrt{2N}\rangle~\in~W^{\sqrt{2N}}$. Indeed, consider vector
\beq
e_{-2Nj + N - 2}|(j - 1)\sqrt{2N}\rangle \in W^{\sqrt{2N}}_{j - 1}.
\ee
\beq
\begin{aligned}
&a_2\, e_{-2Nj + N - 2}|(j - 1)\sqrt{2N}\rangle = [a_2, e_{-2Nj + N - 2}]|(j - 1)\sqrt{2N}\rangle 
\\
& = \sqrt{2N}\,e_{-2Nj + N} |(j - 1)\sqrt{2N}\rangle = \sqrt{2N}\,|j\sqrt{2N}\rangle,
\end{aligned}
\ee
which means $e_{-2Nj + N - 2}|(j - 1)\sqrt{2N}\rangle$ has nonzero component along $a_{-2}|j\sqrt{2N}\rangle$. By the previous argument $a_{-1}^2|j\sqrt{2N}\rangle \in W^{\sqrt{2N}}$, then $a_{-2}|j\sqrt{2N}\rangle\in W^{\sqrt{2N}}$.  
\\\\
Then with commutation relations $[a_i, e_j] = \sqrt{2N}\,e_{i + j}$ we get invariance of $W^{\sqrt{2}}$ under $a_i$ with $i \geq -2$.
\\
Acting by induction we get invariance of $W^{\sqrt{2N}}$ under modes of $a(z)$. For that we consider vector
\beq
e_{-2Nj + N - n}|(j - 1)\sqrt{2N}\rangle,
\ee
\beq
\begin{aligned}
&a_n\, e_{-2Nj + N - n}|(j - 1)\sqrt{2N}\rangle = [a_n, e_{-2Nj + N - n}]|(j - 1)\sqrt{2N}\rangle  
\\
&= \sqrt{2N}\,e_{-2Nj + N} |(j - 1)\sqrt{2N}\rangle = \sqrt{2N}\,|j\sqrt{2N}\rangle,
\end{aligned}
\ee
which means $e_{-2Nj + N - n}|(j - 1)\sqrt{2N}\rangle$ has nonzero component along $a_{-n}|j\sqrt{2N}\rangle$. By the induction assumption all vectors $a_{-i_1}\ldots a_{-i_k}|j\sqrt{2N}\rangle$ with $i_j \in \mathbb{N}$, $k > 1$ and $i_1 + i_2 + \ldots + i_k = n$ lie in $W^{\sqrt{2N}}$, consequently $a_{-n}|j\sqrt{2N}\rangle\in W^{\sqrt{2N}}$. Commuting $e_i$'s and $a_{-n}$ we get invariance of $W^{\sqrt{2}}$ under all $a_i$ with $i\geq - n$.

As soon as all vacuum vectors $|j\sqrt{2N}\rangle \in W^{\sqrt{2N}}$ by construction and $W^{\sqrt{2N}}$ is invariant under modes of $a(z)$ , $W^{\sqrt{2N}} = V_{\sqrt{2N}\,\mathbb{Z}}$.
\end{proof}
With natural identification
\beq
|j\sqrt{2N}\rangle \rightarrow e_{(-2j + 1)N}\,e_{(-2j + 3)N}\,e_{(-2j + 5)N}\,\ldots
\ee
it's clear that $V_{\sqrt{2N}\,\mathbb{Z}}$ is the span of semi-infinite monomials
\beq
\label{basis2N}
\left\{\mathbf{e}_a \hspace{2mm}|\hspace{2mm} a\in\Fib^{(N,\, 2N - 1)}_{\infty}\right\}.
\ee
Thus\footnote{Characters of infinite Fibonacci configurations spaces are calculated in \cite{Kenzhaev_Durfee_2023}}
\beq
\ch V_{\sqrt{2N}\,\mathbb{Z}} \leq \ch \Fib^{(N,\, 2N - 1)}_{\infty} = \sum\limits_{m\in\mathbb{Z}}\,\frac{z^m\,q^{Nm^2}}{(q)_{\infty}},
\ee
but we know from formula \eqref{Character2N} that these characters are equal. Then all vectors from \eqref{basis2N} are linearly independent, which proves the even case of Main Theorem \ref{maintheorem}. 
\\\\
With natural identification
\beq
|j\sqrt{2N + 1}\rangle \rightarrow \theta_{(1 - j)(2N + 1)}\,\theta_{(2 - j)(2N + 1)}\,\theta_{(3 - j)(2N + 1)}\,\ldots
\ee
it's clear that $V_{\sqrt{2N + 1}\,\mathbb{Z}}$ is the span of semi-infinite monomials
\beq
\label{basis2N+1}
\left\{\mathbf{\Theta}_a \hspace{2mm}|\hspace{2mm} a\in\Fib^{(0,\, 2N)}_{\infty}\right\},
\ee
Thus,
\beq
\ch V_{\sqrt{2N + 1}\,\mathbb{Z}} \leq \ch \Fib^{(0,\, 2N)}_{\infty} = \sum\limits_{m\in\mathbb{Z}}\,\frac{z^m\,q^{(2N + 1)\frac{m(m - 1)}{2}}}{(q)_{\infty}},
\ee
but we know from formula \eqref{Character2N+1} that these characters are equal. Then all vectors from \eqref{basis2N+1} are linearly independent, which proves the odd case of Main Theorem \ref{maintheorem}.

\begin{consequence}
\label{ConsequenceStructure}
\beq
\mathbb{C}^{F}_{2N - 1}[e_i \:|\: i \leq -2Nj - N] \simeq W_{j}^{\sqrt{2N}},
\ee
where isomorphism is given by $f \rightarrow f\,|j\sqrt{2N}\rangle$, $f\in \mathbb{C}^{F}_{2N - 1}[e_i \:|\: i \leq -2Nj - N]$.
\\
Then character is
\beq
\label{Character2NN}
\ch W_{j}^{\sqrt{2N}} = \sum\limits_{m = j}^{\infty}\,\frac{z^m\,q^{Nm^2}}{(q)_{m - j}}.
\ee
\ben
\mathbb{C}^{F}_{2N}\{\theta_i \:|\: i \leq -j(2N + 1)\} \simeq W_{j}^{\sqrt{2N + 1}},
\een
where isomorphism is given by $f \rightarrow f\,|j\sqrt{2N + 1}\rangle$, $f\in \mathbb{C}^{F}_{2N}\{\theta_i \:|\: i \leq -j(2N + 1)\}$.
\\
Then character is
\beq
\label{Character2NN+1}
\ch W_{j}^{\sqrt{2N}} = \sum\limits_{m = j}^{\infty}\,\frac{z^m\,q^{(2N + 1)\frac{m(m - 1)}{2}}}{(q)_{m - j}}.
\ee
\end{consequence}
\section{Functional realization of the basic subspace}
\label{Frealization}
There are natural surjective homomorphisms
\beq
\label{HomMap}
\begin{aligned}
&\mathbb{C}[e_{-N}, e_{-N - 1}, \ldots]\rightarrow W_{0}^{\sqrt{2N}}, \hspace{5mm} P \rightarrow P\,|0\rangle,
\\
&\mathbb{C}\{\theta_{0}, \theta_{-1}, \ldots\}\rightarrow W_{0}^{\sqrt{2N + 1}}, \hspace{5mm} \tilde{P} \rightarrow \tilde{P}\,|0\rangle.
\end{aligned}
\ee
I.e.
\beq
\label{IdealHomomorphism}
\begin{aligned}
    &W_{0}^{\sqrt{2N}} \simeq \mathbb{C}[e_{-N}, e_{-N - 1}, \ldots]/I_{2N},
    \\
    &W_{0}^{\sqrt{2N + 1}} \simeq \mathbb{C}\{\theta_{0}, \theta_{-1}, \ldots\}/I_{2N + 1},
\end{aligned}
\ee
where $I_{2N}, I_{2N + 1}$ are some two-sided ideals (kernels of maps \eqref{HomMap}). Structure of these ideals is described by the following 
\begin{theorem}
\label{TheoremIdealStructure}
    \begin{enumerate}
		~\\\        
        \item $I_{2N}$ is generated by relations
        \ben
        e_+^2(z) = 0,\quad e_+\cdot e_+''(z),\quad \ldots\,,\quad e_+\cdot e_+^{(2N - 2)}(z) = 0,
        \een
        where $e_+(z) = \sum\limits_{n \leq - N}\, e_n\,z^{-n - N}$.
        \item $I_{2N + 1}$ is generated by relations
        \ben
        \theta_+\cdot \theta_+'(z) = 0, \quad \theta_+\cdot \theta_+'''(z) = 0, \quad \ldots\,,\quad \theta_+\cdot \theta_+^{(2N - 1)}(z) = 0,
        \een
        where $\theta_+(z) = \sum\limits_{n \leq 0}\, \theta_n\,z^{-n}$.
    \end{enumerate}
\end{theorem}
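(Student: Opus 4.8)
The plan is to squeeze the ideal generated by the listed relations between a spanning set and a known basis. Write $J_{2N}\subset\mathbb{C}[e_{-N},e_{-N-1},\ldots]$ for the ideal generated by the Fourier coefficients of $e_+^2(z),e_+e_+''(z),\ldots,e_+e_+^{(2N-2)}(z)$, and $J_{2N+1}\subset\mathbb{C}\{\theta_0,\theta_{-1},\ldots\}$ for the ideal generated by the coefficients of $\theta_+\theta_+'(z),\ldots,\theta_+\theta_+^{(2N-1)}(z)$; the latter generators are quadratic in the $\theta_i$, hence even and central, so the two-sided ideal they generate is the obvious one. The easy half of the theorem is the inclusion $J_{2N}\subseteq I_{2N}$ (resp. $J_{2N+1}\subseteq I_{2N+1}$), and the content is the reverse inclusion, which I would get from a graded dimension count once the Fibonacci monomials are shown to span $\mathbb{C}[e_{-N},\ldots]/J_{2N}$.

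For the inclusion, split $e(z)=e_+(z)+e_-(z)$ with $e_-(z)=\sum_{n>-N}e_nz^{-n-N}$ collecting the annihilation modes, so that $e_-^{(k)}(z)\,|0\rangle=0$ identically for every $k\ge0$. Each Fourier coefficient of $e(z)\,e^{(2l)}(z)$ is a scalar combination $\sum_{n,m}c_{n,m}\,e_ne_m$ which kills all of $V_{\sqrt{2N}\mathbb{Z}}$ by \eqref{RelationsEven2N}; applying it to $|0\rangle$ and using that the $e_n$ commute pairwise, every summand with $n>-N$ or $m>-N$ annihilates $|0\rangle$, and what survives is exactly the matching coefficient of $e_+(z)\,e_+^{(2l)}(z)$ applied to $|0\rangle$. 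Thus those coefficients lie in $I_{2N}$. The anticommuting case is identical, using $\theta_m\,|0\rangle=0$ for $m>0$, $\theta_m\theta_n=-\theta_n\theta_m$ and $\theta_n^2=0$.

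Next I would check that modulo $J_{2N}$ every monomial in the $e_i$, $i\le-N$, reduces to a linear combination of Fibonacci-$(2N-1)$ monomials. This is the reduction already performed in the proof of Lemma \ref{LemmaBasicSubspace2}, reread inside $\mathbb{C}[e_{-N},\ldots]$: the coefficient of a given power of $z$ in $e_+(z)\,e_+^{(2l)}(z)$ is a finite sum over pairs $n,m\le-N$ with $n+m$ fixed, namely the ``positive half'' of \eqref{RelationsEven2N}, and the nondegeneracy of the leading principal $N\times N$ minor --- the even Vandermonde determinant \eqref{EvenVandermonde} --- concerns exactly these truncated coefficients, since that minor only involves the columns $e_ne_n,e_{n-1}e_{n+1},\ldots,e_{n-N+1}e_{n+N-1}$, all of which survive truncation when $n$ is away from $-N$ (near $i=-N$ the truncated relations only become stronger and dispose of the offending pairs directly). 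So the induction on $\deg_q$ and on the leftmost index from Lemma \ref{LemmaBasicSubspace2} goes through verbatim using the relations $e_+e_+^{(2l)}(z)=0$ alone. Combining this with the inclusion of the previous paragraph, there is a surjection $\mathbb{C}[e_{-N},\ldots]/J_{2N}\twoheadrightarrow W_0^{\sqrt{2N}}$ whose source is spanned by Fibonacci-$(2N-1)$ monomials while, by Consequence \ref{ConsequenceStructure}, their images form a basis of the target; a spanning set whose image is a basis is itself a basis, so this surjection is an isomorphism and $J_{2N}=I_{2N}$. The odd case runs the same way with the odd Vandermonde determinant \eqref{OddVandermonde}, the antisymmetric Fibonacci-$2N$ monomials, and Consequence \ref{ConsequenceStructure} for $W_0^{\sqrt{2N+1}}$.

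The step I expect to be delicate is the middle one: one must be sure that \emph{nothing beyond the truncated relations} $e_+e_+^{(2l)}(z)=0$ (resp. $\theta_+\theta_+^{(2l-1)}(z)=0$) is ever invoked in the reduction to Fibonacci form. The resolution is that the coefficient matrices in the proof of Lemma \ref{LemmaBasicSubspace2} carry their nondegenerate leading $N\times N$ minors entirely inside the columns that remain after one discards every pair involving a mode of index $>-N$, so the induction is untouched; but this has to be spelled out, and the boundary indices $i$ near $-N$ (where the truncated relations degenerate, but only become more restrictive) checked by hand.
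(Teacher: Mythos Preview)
Your proposal is correct and follows the same route as the paper: the paper's own proof is the one-line assertion that the result follows from Lemma~\ref{LemmaBasicSubspace2} and Consequence~\ref{ConsequenceStructure}, since both sides of \eqref{IdealHomomorphism} are thereby identified with the same space of Fibonacci polynomials. You have simply unpacked this assertion --- verifying the easy inclusion $J\subseteq I$, noting that the reduction of Lemma~\ref{LemmaBasicSubspace2} already takes place entirely within the truncated relations once one works over $|0\rangle$, and closing with the graded dimension comparison --- which is exactly the content the paper leaves implicit.
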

\begin{proof}
    This theorem is a consequence of Lemma \ref{LemmaBasicSubspace2} and Consequence \ref{ConsequenceStructure} , as soon as both sides of \eqref{IdealHomomorphism} are isomorphic to space of Fibonacci polynomials in corresponding variables.
\end{proof}
Space $\mathbb{C}[e_{-N}, e_{-N - 1}, \ldots]$ has a natural pairing with symmetric algebra of space of differential 1-forms on circle of type $z^{N - 1}\,\mathbb{C}[z]\,dz$. Indeed, let
\ben
\mathbb{C}[e_{-N}, e_{-N - 1}, \ldots] = \bigoplus\limits_{m \geq 0}\, V_m,
\een
where $V_m$ is subspace of polynomials of fixed charge (number of $e$'s in monomials) $m$. $V_m$ is paired with $S^m(z^{N - 1}\,\mathbb{C}[z]\,dz)$ by the formula
\ben
\begin{aligned}
&\langle (z_1\,\ldots\,z_m)^{N - 1}\,f(z_1, z_2, \ldots, z_m)\,dz_1\,dz_2\ldots\,dz_m, (\varphi(z_1)\otimes e)\,(\varphi(z_2)\otimes e) \ldots (\varphi(z_m)\otimes e) \rangle
\\
&:= 	\left\{\res_{z_1 =\, \ldots \,= z_m = 0}\,(z_1\,\ldots\,z_m)^{N - 1}\,f(z_1, z_2, \ldots, z_m)\, \varphi(z_1)\,\varphi(z_2)\,\ldots\varphi(z_m)\,dz_1\,dz_2\ldots\,dz_m\right\},
\end{aligned}
\een
where $\varphi_i(x)\in\frac{1}{x^N}\,\mathbb{C}\left[\frac{1}{x}\right]$.
Thus, restricted dual $\left(W_{0}^{\sqrt{2N}}\right)^*$ is naturally isomorphic to annihilator of $I_{2N}$, which is from Theorem \ref{TheoremIdealStructure}
\ben
\left(W_{0}^{\sqrt{2N}}\right)^*_m \simeq \left\{\,\prod\limits_{i < j}\,(z_i - z_j)^{2N}\, f(z_1, z_2, \ldots, z_m)(z_1\,\ldots\,z_m)^{N - 1}\,dz_1\,dz_2\ldots dz_m \:|\: f \text{ is symmetric polynomial } \right\}. 
\een
Then
\ben
\ch W_0^{\sqrt{2N}} = \sum\limits_{m \geq 0}\,\ch \left(W_{0}^{\sqrt{2N}}\right)^*_m = \sum\limits_{m = 0}^{\infty} \frac{z^m\,\left(q^{\frac{m(m - 1)}{2}}\right)^{2N}\,q^{mN}}{(q)_m}= \sum\limits_{m = 0}^{\infty}\,\frac{z^m\,q^{Nm^2}}{(q)_m},
\een
which expectedly coincides with formula \eqref{Character2NN}.
\\
Same argument in odd case gives
\ben
\left(W_{0}^{\sqrt{2N + 1}}\right)^*_m \simeq \left\{\,\prod\limits_{i < j}\,(z_i - z_j)^{2N + 1}\, f(z_1, z_2, \ldots, z_m)\,\frac{dz_1\,dz_2\ldots dz_m}{z_1\,\ldots\, z_m} \:|\: f \text{ is symmetric polynomial } \right\} 
\een
with character
\ben
\ch W_0^{\sqrt{2N + 1}} = \sum\limits_{m \geq 0}\,\ch \left(W_{0}^{\sqrt{2N}}\right)^*_m = \sum\limits_{m = 0}^{\infty}\,\frac{z^m\,q^{(2N + 1)\frac{m(m - 1)}{2}}}{(q)_m},
\een
which expectedly coincides with formula \eqref{Character2NN+1}.

\bibliography{bibliography}{}

\end{document}